\documentclass{jfp1-tweaked}
\usepackage{etex}

\usepackage{tabvar}

\usepackage[utf8]{inputenc} 

\usepackage{amsmath}
\usepackage{amssymb}
\usepackage{stmaryrd}
\usepackage{graphicx}
\usepackage{nameref,hyperref,cleveref}
\usepackage{color}
\usepackage{tikz}
\usetikzlibrary{positioning}
\usetikzlibrary{decorations.markings}

\definecolor{darkgreen}{RGB}{20,200,10}

\usepackage[normalem]{ulem}

\usepackage{amsthm}
\usepackage{proof}

\usepackage{thmtools}

\usepackage[all,2cell]{xy}
\UseAllTwocells

\usepackage{multicol}

\newtheorem{theorem}{Theorem}[section]

\newtheorem{corollary}[theorem]{Corollary}

\newtheorem{proposition}[theorem]{Proposition}
\newtheorem{claim}[theorem]{Claim}

\newtheorem{definition}[theorem]{Definition}

\newtheorem{example}{Example}

\crefname{theorem}{thm.}{theorems}
\crefname{corollary}{cor.}{corollaries}
\crefname{proposition}{prop.}{propositions}
\crefname{claim}{claim}{claims}
\crefname{observation}{obs.}{observations}

\theoremstyle{definition}
\crefname{definition}{defn.}{definitions}


\newcommand\etal{\emph{et al.}}

\newcommand\definand[1]{{\bf #1}}

\newcommand\set[1]{\{\,#1\,\}}

\newcommand\GFlin{L}
\newcommand\GFpla{P}

\newcommand\GFlinind{\GFlin_{\mathrm{ind}}}
\newcommand\GFplaind{\GFpla_{\mathrm{ind}}}

\newcommand\trip[3]{(#1~#2~#3)}
\newcommand\tup[2]{(#1~#2)}

\newcommand\cartgp{\mathcal{C}}
\newcommand\trigp{\mathcal{T}}
\newcommand\freegp[1]{\left\langle #1\right\rangle}

\newcommand\defeq{\mathbin{\overset{\mathrm{def}}{=}}}

\newcommand\ImpL{\multimap}
\newcommand\ImpR{\mathbin{\text{\reflectbox{$\multimap$}}}}

\newcommand\impL[2][\bot]{{#2} \mathbin{\ImpL} {#1}}
\newcommand\impR[2][\bot]{{#1} \mathbin{\ImpR} {#2}}

\newcommand\sem[1]{\left\llbracket #1\right\rrbracket}

\newcommand\op{\mathrm{op}}
\newcommand\Cat{\mathbf{Cat}}

\newcommand\gset[1]{#1\text{-set}}
\newcommand\fixpt[1]{\mathrm{fix}_{#1}}

\newcommand\llift[2]{\lambda #1[#2]}

\def\Icomb{\mathrm{I}}
\def\Bcomb{\mathrm{B}}
\def\Ccomb{\mathrm{C}}


\title[Journal of Functional Programming]{Linear lambda terms as invariants \\ of rooted trivalent maps}
\author[N. Zeilberger]{Noam Zeilberger \\ Inria, Campus de l'École Polytechnique, 91120 Palaiseau, France \\ \email{noam.zeilberger@gmail.com}}

\begin{document}
\maketitle[t]

\begin{abstract}
The main aim of the article is to give a simple and conceptual account for the correspondence (originally described by Bodini, Gardy, and Jacquot) between $\alpha$-equivalence classes of closed linear lambda terms and isomorphism classes of rooted trivalent maps on compact oriented surfaces without boundary, as an instance of a more general correspondence between linear lambda terms with a context of free variables and rooted trivalent maps with a boundary of free edges.
We begin by recalling a familiar diagrammatic representation for linear lambda terms, while at the same time explaining how such diagrams may be read formally as a notation for endomorphisms of a reflexive object in a symmetric monoidal closed (bi)category.
From there, the ``easy'' direction of the correspondence is a simple forgetful operation which erases annotations on the diagram of a linear lambda term to produce a rooted trivalent map.
The other direction views linear lambda terms as \emph{complete invariants} of their underlying rooted trivalent maps, reconstructing the missing information through a Tutte-style topological recurrence on maps with free edges.
As an application in combinatorics, we use this analysis to enumerate bridgeless rooted trivalent maps as linear lambda terms containing no closed proper subterms, and conclude by giving a natural reformulation of the Four Color Theorem as a statement about typing in lambda calculus.
\end{abstract}
\section{Introduction}
\label{sec:intro}

This paper follows recent work on the combinatorics of linear lambda terms, which has uncovered various connections to the theory of graphs on surfaces (or ``maps'').
It is currently known that there exist size-preserving correspondences between all of the following pairs of families of objects, some with explicit bijections but all at least at the level of generating functions \cite{bodini-et-al,zg2015rpmnpt,z2015counting}:
\begin{center}
\begin{oldtabular}{|@{\,}c@{\,}|@{\,}c@{\,}|@{\,}c@{\,}|}
\hline 
Family of rooted maps & Family of lambda terms  & OEIS entry
\\
\hline\hline
rooted trivalent maps & linear lambda terms & \href{https://oeis.org/A062980}{A062980}\\
\hline
rooted planar maps & normal planar lambda terms & \href{https://oeis.org/A000168}{A000168} \\
\hline
rooted maps & normal linear lambda terms / $\sim$ & \href{https://oeis.org/A000698}{A000698} \\
\hline
\end{oldtabular}
\end{center}
(Here ``OEIS'' is short for \emph{The On-Line Encyclopedia of Integer Sequences} \cite{oeis}.)
Although the existence of such connections is intriguing, it is not yet obvious to what extent they have a deeper ``meaning''.
My aim in this article, therefore, is to revisit the basic situation of rooted trivalent maps and propose a slightly more general and conceptual account of the bijection originally given by Bodini, Gardy, and Jacquot -- an account which hopefully suggests some clear directions for further exploration.
The main insights I hope to convey are that:
\begin{enumerate}
\item Bodini~\etal's bijection between \emph{closed} linear lambda terms and rooted trivalent maps \emph{on compact oriented surfaces without boundary} is really an instance of a more general bijection that relates linear lambda terms with free variables to rooted trivalent maps with a marked boundary of free edges.
\item If we represent linear lambda terms using a natural diagrammatic syntax, then the corresponding rooted trivalent maps are obtained simply by erasing some information stored locally at the nodes of the diagram.
  Moreover, through a little bit of category theory, this way of representing linear lambda terms (which is folklore) can be understood within the wider context of \emph{string diagrams}, as a notation for endomorphisms of a reflexive object in a symmetric monoidal closed (bi)category.
\item Conversely, by considering connectivity properties of the underlying graph, it is possible to invert this forgetful transformation through a recursive decomposition of rooted trivalent maps with free edges (similar in spirit to Tutte's seminal analysis of rooted planar maps \cite{tutte1968}).  In effect, a linear lambda term can be seen as a \emph{topological invariant} of a rooted trivalent map (analogous to, say, the chromatic polynomial of a graph), which is moreover a complete invariant in the sense that it characterizes the rooted trivalent map up to isomorphism.
\end{enumerate}
One immediate application of this analysis will be a simple characterization of rooted trivalent maps without bridges, as linear lambda terms with no closed proper subterms.
I will then show how to combine this characterization with a link between typing and graph-coloring, to yield a surprising yet natural lambda calculus reformulation of the Four Color Theorem (in its equivalent form as the statement that every bridgeless trivalent planar map is edge 3-colorable).

The rest of the paper is structured as follows.
\Cref{sec:review-maps,sec:review-lambda} provide some elementary background on maps and lambda calculus, while \Cref{sec:stringdiagrams} explains how to represent linear lambda terms graphically and how to interpret these diagrams categorically.
The bijection between linear lambda terms with free variables and rooted trivalent maps with free edges (which extends the bijection of \cite{bodini-et-al} on closed terms) is presented in \Cref{sec:linear-trivalent,sec:trivalent-linear}.
Finally, \Cref{sec:4ct} discusses the characterization of bridgeless rooted trivalent maps, and the reformulation of the Four Color Theorem (4CT).

\section{Classical definitions for rooted trivalent maps}
\label{sec:review-maps}

This section recalls some standard definitions from the theory of maps (for further background on the subject, see Lando and Zvonkin \shortcite{landozvonkin}).
In topological terms, a map can be defined as a 2-cell embedding $i : G \hookrightarrow X$ of an undirected graph $G$ (loops and multiple edges allowed) into a surface $X$: that is, a representation of the vertices $v \in G$ by points $i(v) \in X$ and the edges $v_1 \overset{e}{\leftrightarrow} v_2 \in G$ by arcs $i(v_1) \overset{i(e)}\frown i(v_2) \in X$, such that no two arcs cross, and such that the complement of the graph inside the surface $X \setminus i(G)$ is a disjoint union of simply-connected regions (called \emph{faces}; note that this last condition implies that if the underlying surface $X$ is connected, then $G$ must be a connected graph).
Two maps $G \overset{i}\hookrightarrow X$ and $G' \overset{i'}\hookrightarrow X'$ are said to be \emph{isomorphic} if there is a homeomorphism between the underlying surfaces $h : X \to X'$ whose restriction $h|_{i(G)}$ witnesses an isomorphism of graphs $G \to G'$.

One of the beautiful aspects of the theory is that in many situations, maps also admit a purely algebraic description as a collection of permutations satisfying a few properties.
For example, a 2-cell embedding of a graph into any \emph{compact oriented surface without boundary} \cite{jones-singerman} may be represented as a pair of permutations $v$ and $e$ on a set $M$ such that
\begin{enumerate}
\item $e$ is a fixed point-free involution, and
\item the group $\freegp{v,e}$ generated by the permutations acts transitively on $M$ (i.e., for any pair of elements $x,y\in M$ it is possible to go from $x$ to $y$ by some sequence of applications of $v$ and $e$).
\end{enumerate}
This kind of representation is sometimes called a \emph{combinatorial map}.
The idea is that the elements of the set $M$ stand for ``darts'' or ``half-edges'', so that the involution $e$ describes the gluing of pairs of darts to form an edge, while the permutation $v$ encodes the cyclic (say, counterclockwise) ordering of darts around each vertex.
\Cref{fig:example-map} gives an example of a planar map (i.e., an embedding of a graph into the sphere $X = S^2$) represented by such permutations.
In addition to the \emph{vertex permutation} $v$ and the \emph{edge permutation} $e$, to any combinatorial map one may associate a \emph{face permutation} $f$ by the equation $f = (ev)^{-1} = v^{-1}e$, representing the cyclic ordering of darts around each face of the corresponding embedded graph.
\begin{figure}
  \begin{center}
  \begin{minipage}{0.2\textwidth}
  \begin{tikzpicture}
    \node (a) at (0,0) {$\bullet$};
    \node (b) at (-1,1) {$\bullet$};
    \node (c) at (0.1,1.5) {$\bullet$};
    \node (d) at (2,1.6) {$\bullet$};
    \node (e) at (1.8,0.2) {$\bullet$};
    \node (f) at (0.2,2.5) {$\bullet$};
    \node (g) at (2.4,-0.2) {};
    \draw (a.center) to node [pos=0.2] {\tiny$3$} (b.center);
    \draw (b.center) to node [pos=0.2] {\tiny$4$}  (a.center);
    \draw (a.center) to node [pos=0.3] {\tiny$2$} (c.center);
    \draw (c.center) to node [pos=0.2] {\tiny$14$} (a.center);
    \draw (b.center) to node [pos=0.2] {\tiny$5$} (c.center);
    \draw (c.center) to node [pos=0.2] {\tiny$13$} (b.center);
    \draw (c.center) to node [pos=0.2] {\tiny$12$} (d.center);
    \draw (d.center) to node [pos=0.2] {\tiny$10$} (c.center);
    \draw (d.center) to node [pos=0.2] {\tiny$11$} (e.center);
    \draw (e.center) to node [pos=0.2] {\tiny$15$} (d.center);
    \draw (e.center) to node [pos=0.15] {\tiny$16$} (a.center);
    \draw (a.center) to node [pos=0.15] {\tiny$1$} (e.center);
    \draw (b.center) to node [pos=0.2] {\tiny$6$} (f.center);
    \draw (f.center) to node [pos=0.2] {\tiny$7$} (b.center);
    \draw (f.center) to node [pos=0.2] {\tiny$8$} (d.center);
    \draw (d.center) to node [pos=0.2] {\tiny$9$} (f.center);
    \draw [bend right=45,rounded corners=5pt] (e.center) to node [pos=0.3] {\tiny$17$} (g.center) to node [pos=0.7] {\tiny$18$} (e.center);
  \end{tikzpicture}
  \end{minipage}
  \qquad\qquad
  \begin{minipage}{0.6\textwidth}
  \begin{align*}
    M &= \set{1,\dots,18} \\
    v &= \trip 123\trip 456\tup 78\trip{9}{10}{11}\trip{12}{13}{14}(15~16~17~18) \\
    e &= \tup 1{16}\tup 2{14}\tup 34\tup 5{13}\tup 67\tup 89\tup{10}{12}\tup {11}{15}\tup{17}{18} \\
    f &= (1~15~10~14)(2~13~4)(3~6~8~11~18~16)(5~12~9~7)(17)
  \end{align*}
  \end{minipage}
  \end{center}
  \caption{A planar map represented by permutations}
  \label{fig:example-map}
\end{figure}
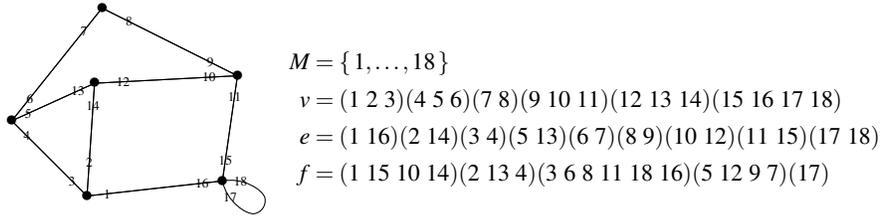
Two combinatorial maps $(M,v,e)$ and $(M',v',e')$ are considered as isomorphic if there is a bijection between the underlying sets $h : M \to M'$ which is compatible with the action of the permutations, $hv = v'h$, $he = e'h$.
The following definition rephrases all of this a bit more efficiently:
\begin{definition}
  Let $\cartgp$ be the group $\cartgp \defeq \freegp{v,e \mid e^2 = 1}$.  A (combinatorial) \definand{map} (on a compact oriented surface without boundary) is a transitive $\gset\cartgp$ on which the generator $e$ acts without fixed points.
\end{definition}
\noindent
One nice feature of combinatorial maps is that it is easy to compute their \emph{genus}.
\begin{definition}
Let $M$ be a combinatorial map.
The \definand{genus} $g$ of $M$ is defined by the Euler-Poincar\'e formula $c(v) - c(e) + c(f) = 2 - 2g$, where $v$, $e$, and $f$ are respectively the vertex, edge, and face permutations associated to $M$, and $c(\pi)$ counts the number of cycles in the cycle decomposition of $\pi$.
(For example, for the genus $g=0$ map of \Cref{fig:example-map} we have $c(v) - c(e) + c(f) = 6 - 9 + 5 = 2$.)
\end{definition}
\noindent
In the rest of the paper we will be focused on \emph{trivalent} maps (also called cubic maps), which can be defined by the algebraic condition that the vertex permutation is fixed point-free and of order three.
\begin{definition}
\label{def:trivalent}
  Let $\trigp$ be the group $\trigp \defeq \freegp{v,e \mid v^3 = e^2 = 1}$.  A \definand{trivalent map} is a transitive $\gset\trigp$ on which the generators $v$ and $e$ act without fixed points.
\end{definition}
\noindent
Moreover, we will always be speaking about so-called ``rooted'' maps.
\begin{definition}
\label{def:rooted}
A \definand{rooted} (trivalent) map is a (trivalent) map $M$ with a distinguished element $r \in M$ called the \uline{root}.
An isomorphism of rooted maps $f : (M,r) \to (M',r')$ is an isomorphism of maps $f : M \to M'$ which preserves the root $f(r) = r'$.
\end{definition}
\noindent
Topologically, a rooting of a map can be described as the choice of an edge, vertex, and face all mutually incident, or equivalently as the choice of an edge together with an orientation of that edge.
The original motivation for the study of rooted maps in combinatorics was that they are \emph{rigid} objects (meaning that they have no automorphisms other than the identity) and hence are easier to count than unrooted maps,\footnote{For a historical account, see Ch.~10 of Tutte's \emph{Graph Theory as I Have Known it} (1998, Oxford).} but it is also worth remarking that there is a general correspondence
\begin{center}
pointed transitive $G$-sets $\leftrightarrow$ subgroups of $G$
\end{center}
which sends any transitive $\gset{G}$ $M$ equipped with a distinguished point $r \in M$ to the stabilizer subgroup $G_r = \set{g \in G \mid g \ast r = r}$, and any subgroup $H \subseteq G$ to the action of $G$ on cosets $G/H$ together with the distinguished coset $H$.
In particular, every rooted trivalent map uniquely determines a subgroup of the modular group $\trigp \cong \mathrm{PSL}(2,\mathbb{Z})$, while an unrooted map only determines one up to conjugacy \cite{jones-singerman94schneps,vidal2010}.

Finally, it is fairly common (cf.~\cite{jones-singerman94schneps,vidal2010}) to relax the conditions of fixed point-freeness in the definition of a general map and/or of a trivalent map.
Intuitively, fixed points of $e$ represent ``dangling'' edges, while fixed points of $v$ represent univalent vertices.
Precise formulations differ, however, and in \Cref{sec:linear-trivalent} we will introduce a generalization of the classical definition of rooted trivalent maps which is motivated by the correspondence with linear lambda terms.

\section{Basic definitions for linear lambda terms}
\label{sec:review-lambda}

Here we cover the small amount of background on lambda calculus that we will need in order to talk about linear lambda terms (for a more general introduction, see Barendregt \shortcite{barendregt1984}).
The terms of pure lambda calculus are constructed from variables ($x,y,\dots$) using only the two basic operations of \emph{application} $t(u)$ and \emph{abstraction} $\lambda x[t]$.
Within a given term, one distinguishes \emph{free} variables from \emph{bound} variables.
An abstraction $\lambda x[t]$ is said to bind the occurrences of $x$ within the subterm $t$, and any variable which is not bound by an abstraction is said to be free.
Two lambda terms are considered equivalent (``$\alpha$-equivalent'') if, roughly speaking, they differ only by renaming of bound variables.

A term is said to be \emph{linear} if every variable (free or bound) has exactly one occurrence: for example, the terms $\lambda x[x(\lambda y[y])]$, $\lambda x[\lambda y[x(y)]]$, and $\lambda x[\lambda y[y(x)]]$ are linear, but the terms $\lambda x[x(x)]$, $\lambda x[\lambda y[x]]$, and $\lambda x[\lambda y[y]]$ are non-linear.
To make this definition more precise, it is natural to consider linear lambda terms as indexed explicitly by lists of free variables, called \emph{contexts}, which also affects the definition of $\alpha$-equivalence.
\begin{definition}\label{def:linear}
A \definand{context} is an ordered list of distinct variables $\Gamma = (x_1,\dots,x_k)$.
We write $(\Gamma,\Delta)$ for the concatenation of two contexts $\Gamma$ and $\Delta$, with the implicit condition that they contain disjoint sets of variables.
Let $\Gamma\vdash t$ be the relation between contexts and lambda terms defined inductively by the following rules:
\begin{equation}
\infer{x \vdash x}{}
\qquad
\infer{\Gamma,\Delta \vdash t(u)}{\Gamma \vdash t & \Delta \vdash u}
\qquad
\infer{\Gamma \vdash \lambda x[t]}{\Gamma, x \vdash t}
\qquad\qquad
\infer{\Gamma, x,y,\Delta \vdash t}{\Gamma, y,x,\Delta \vdash t}
\label{eqn:linrules}
\end{equation}
Then a \definand{linear lambda term} is a pair $(\Gamma,t)$ of a context and a term such that $\Gamma \vdash t$.
Two linear lambda terms $(\Gamma,t)$ and $(\Gamma',t')$ are said to be \definand{$\alpha$-equivalent} if they only differ by a series of changes of free or bound variables, defined as follows.
Supposing that $\Gamma = (\Gamma_1,x,\Gamma_2)$ and $\Gamma' = (\Gamma_1,y,\Gamma_2)$, then $(\Gamma,t)$ and $(\Gamma',t')$ differ by a single \definand{change of free variable} if $t' = t\{y/x\}$, where $t\{y/x\}$ denotes the substitution of $y$ for $x$ in $t$.
Similarly, $(\Gamma,t)$ and $(\Gamma',t')$ differ by a single \definand{change of bound variable} if $t'$ arises from $t$ by replacing some subterm $\lambda x[u]$ by $\lambda y[u\{y/x\}]$.
\end{definition}
\noindent
For combinatorists, it might be helpful to see the two-variable generating function counting $\alpha$-equivalence classes of linear lambda terms of a given size in a given context.
Let $t_{n,k}$ stand for the number of $\alpha$-equivalence classes of linear lambda terms with $n$ total applications and abstractions and with $k$ free variables.
Then the generating function
$\GFlin(z,x) = \sum_{n,k} t_{n,k} \frac{x^k z^n}{k!}$
satisfies the following functional-differential equation:
\begin{equation}
\GFlin(z,x) = x + z\GFlin(z,x)^2 + z\frac{\partial}{\partial x}\GFlin(z,x)
\label{eqn:lingf}
\end{equation}
The three summands in equation (\ref{eqn:lingf}) correspond to the three rules on the left side of (\ref{eqn:linrules}), 
$$
\infer{x \vdash x}{}
\qquad
\infer{\Gamma,\Delta \vdash t(u)}{\Gamma \vdash t & \Delta \vdash u}
\qquad
\infer{\Gamma \vdash \lambda x[t]}{\Gamma, x \vdash t}
$$
while the rule on the right side of (\ref{eqn:linrules})
$$
\infer{\Gamma, x,y,\Delta \vdash t}{\Gamma, y,x,\Delta \vdash t}
$$
explains why the generating function $\GFlin(z,x)$ is of \emph{exponential type} in the parameter $x$: if $(\Gamma,t)$ is a linear lambda term, then so is $(\Gamma',t)$ for any permutation $\Gamma'$ of $\Gamma$.
Finally, note that instantiating $\GFlin(z,0)$ gives the ordinary generating function (OGF) counting \emph{closed} linear lambda terms by total number of applications and abstractions,
$$
L(z,0) = z + 5z^3 + 60z^5 + 1105z^7 + 27120z^9 + \dots
$$
and which also counts rooted trivalent maps (OEIS \href{https://oeis.org/A062980}{A062980}; cf.~\cite{vidal2010,bodini-et-al}).
Besides the notion of $\alpha$-equivalence itself, the real interest of lambda calculus is that one can \emph{calculate} with it using the rules of \emph{$\beta$-reduction} and/or \emph{$\eta$-expansion}:
$$(\lambda x[t])(u) \overset\beta\to t\{u/x\}
\qquad
t \overset\eta\to \lambda x[t(x)]$$
Since $\beta$-reduction and $\eta$-expansion preserve linearity, the fragment of lambda calculus consisting of the linear lambda terms can be seen as a proper subsystem with various special properties (for example, computing the $\beta$-normal form of a linear lambda term is PTIME-complete \cite{mairson2004}, whereas for non-linear terms it is undecidable whether a normal form even exists).
Although it is sufficient to consider $\alpha$-equivalence for the purpose of presenting the bijection between linear lambda terms and rooted trivalent maps (which we will describe in \Cref{sec:linear-trivalent,sec:trivalent-linear}), we should nonetheless be aware that $\beta$-reduction and $\eta$-expansion are lurking in the background (and as mentioned in the introduction, there are some known connections between enumeration of $\beta$-normal terms and enumeration of rooted maps \cite{zg2015rpmnpt,z2015counting}).

\section{String diagrams for reflexive objects}
\label{sec:stringdiagrams}

\tikzset{->-/.style={decoration={
  markings,
  mark=at position .5 with {\arrow{>}}},postaction={decorate}}}
\tikzset{!->-/.style={decoration={
  markings,
  mark=at position #1 with {\arrow{>}}},postaction={decorate}}}

A natural way of visualizing a lambda term is to begin by drawing a tree representing the underlying structure of applications and abstractions, and then add extra edges connecting each bound variable occurrence to its corresponding abstraction.
For example, for the term $\lambda x[\lambda y[x(\lambda z[y(z)])]]$
we begin with the tree on the left, adding links to obtain the diagram on the right:
\begin{equation}\label{eqn:lambda-graph}
\vcenter{\hbox{
\begin{tikzpicture}[scale=0.8,level/.style={sibling distance=15em/#1, level distance=2em}]
  \node (z) {} child {
  node (a) {$\lambda$}
  child { node (b) {$\lambda$} 
    child {
      node (c) {$@$}
      child {node (d) {} }
      child {node (e) {$\lambda$}
        child {node (f) {$@$}
          child { node (g) {} }
          child { node (h) {} }
      }}
  }}};
\end{tikzpicture}}}
\quad\leadsto\quad
\vcenter{\hbox{
\begin{tikzpicture}[scale=0.8,level/.style={sibling distance=15em/#1, level distance=2em}]
  \node (z) {} child {
  node (a) {$\lambda$}
  child { node (b) {$\lambda$} 
    child {
      node (c) {$@$}
      child {node (d.center) {} }
      child {node (e) {$\lambda$}
        child {node (f) {$@$}
          child { node (g.center) {} }
          child { node (h.center) {} }
      }}
  }}};
  \node (x) [right=1em of h] {};
  \node (y) [below=8.5em of a] {};
  \node (z) [right=4em of y] {};
  \draw (d.center) to [bend right=30,rounded corners=5pt] (y) to [bend right=30,rounded corners=5pt] (z) to [bend right=75,rounded corners=5pt] (a);
  \draw[bend right=135] (h.center) to (e);
  \draw[bend right=45,rounded corners=5pt] (g.center) to (x.center) to (b);
\end{tikzpicture}}}
\end{equation}
This approach is especially natural for linear lambda terms, since each $\lambda$-abstraction binds exactly one variable occurrence.
Mairson \shortcite{mairson2002dilbert} refers to these kinds of diagrams as \emph{proof-nets} (probably because they are essentially equivalent to Girard's proof-nets for the implicative fragment of linear logic), while Bodini et.~al \shortcite{bodini-et-al} speak of them as ``syntactic trees''.
I do not know just how far back the idea goes, but I've even seen such a diagram used to display a (linear) lambda term in an old essay by Knuth \shortcite{knuth1970}, who called it a particular kind of \emph{information structure}.

In \Cref{sec:linear-trivalent,sec:trivalent-linear} we will use these types of diagrams to help explain the bijection between linear lambda terms and rooted trivalent maps.
The aim of this section is to briefly present a \emph{rational reconstruction} of the diagrams for the interested reader, using the framework of Joyal and Street \shortcite{joyal-street-i} to read these ``string diagrams'' as a notation for endomorphisms of a reflexive object.
I will assume that the reader already has some background in category theory -- for others, this section can be safely skipped, since the rest of the paper will only make use of the informal description of the diagrams, and not their categorical characterization.\footnote{The analysis given here follows \cite[Zeilberger and Giorgetti (2015, \S3.1)]{zg2015rpmnpt} but is considerably simplified; otherwise, the observation that the framework of string diagrams can be used to link lambda calculus proof-nets to reflexive objects is original as far as I know.}

An important insight of Dana Scott was that the equational theory of pure lambda calculus can be modelled using a \emph{reflexive object} in a cartesian closed category, in the sense of an object $U$ equipped with a retraction to its space of internal endomorphisms $U^U$ \cite{scott1980,hyland-lambda-calculus}.
To capture linear lambda calculus rather than classical lambda calculus, it suffices to rephrase Scott's original definition in the setting of \emph{symmetric monoidal closed} (smc) categories, replacing the exponential object $U^U$ by the internal hom object $\impR[U]{U}$ \cite{jacobs1993}.
Moreover, one can model the theory of ($\beta\eta$-)rewriting rather than the theory of equality by working with \emph{bicategories} rather than categories \cite{seely1987}.
Consider then the following definition:
\begin{definition}\label{defn:reflexive-object}
A \definand{reflexive object} in a smc bicategory $\mathcal{K}$ is an object $U$ equipped with an adjunction to its space of internal endomorphisms $\impR[U]{U}$.\footnote{This definition of reflexive object can be seen as a common generalization of two definitions introduced by Seely~\shortcite{seely1987} and by Jacobs~\shortcite{jacobs1993}, who considered, respectively, how to generalize Scott's definition to the setting of (cartesian closed) bicategories and of smc (1-)categories.}
\end{definition}
\noindent
A reflexive object in this sense consists of a pair of 1-cells
$$
\xymatrixcolsep{3pc}
\xymatrixrowsep{2.5pc}
\xymatrix{U \rtwocell^@_\lambda{'} & V}
$$
where the object $V = \impR[U]{U}$ comes with an equivalence of categories $\mathcal{K}(X \otimes U,U) \cong \mathcal{K}(X,V)$ natural in $X$, 
together with a pair of 2-cells 
$$
\xymatrixcolsep{3pc}
\xymatrixrowsep{2.5pc}
\xymatrix{\rrtwocell<\omit>{<3>\eta}U \ar[dr]_{@}\ar@{=}[rr] & & U \\  & V\ar[ur]_\lambda &}
\qquad\qquad
\xymatrix{\rrtwocell<\omit>{<5>\beta} & U\ar[dr]^{@} & \\ V \ar[ur]^{\lambda}\ar@{=}[rr] & & V}
$$
satisfying the zig-zag identities
$(\lambda \beta) \circ (\eta \lambda) = 1_\lambda$ and 
$(\beta @) \circ (@ \eta) = 1_@$.
Such data provides a model of linear lambda calculus in the following sense:
\begin{claim}[Soundness] \label{claim:soundness}
Let $U$ be a reflexive object in a smc bicategory $\mathcal{K}$.
Any linear lambda term $(\Gamma,t)$ can be interpreted as a 1-cell $\sem{t} : U^{\otimes k} \to U$ in $\mathcal{K}$, where $\Gamma = x_1,\dots,x_k$ and $U^{\otimes k}$ is the $k$-fold tensor product of $U$.
Moreover, this interpretation respects $\alpha$-equivalence, $\beta$-reduction, and $\eta$-expansion.
\end{claim}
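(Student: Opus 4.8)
The plan is to define the interpretation $\sem{t}$ by structural induction on a derivation of the judgement $\Gamma \vdash t$, following the four rules of \eqref{eqn:linrules}. Writing $V = \impR[U]{U}$ for the space of internal endomorphisms, $\mathrm{ev} : V \otimes U \to U$ for the evaluation $1$-cell (the image of $1_V$ under $\mathcal{K}(V \otimes U, U) \simeq \mathcal{K}(V, V)$), and $\widehat{f} : X \to V$ for the transpose of a $1$-cell $f : X \otimes U \to U$, the clauses are: $\sem{x \vdash x} \defeq 1_U$; for an application $\Gamma,\Delta \vdash t(u)$ assembled from $\sem{t} : U^{\otimes j} \to U$ and $\sem{u} : U^{\otimes l} \to U$, set $\sem{t(u)} \defeq \mathrm{ev} \circ \big((@ \circ \sem{t}) \otimes \sem{u}\big)$, precomposed with the canonical equivalence $U^{\otimes(j+l)} \simeq U^{\otimes j} \otimes U^{\otimes l}$; for an abstraction $\Gamma \vdash \lambda x[t]$ assembled from $\sem{t} : U^{\otimes k} \otimes U \to U$, set $\sem{\lambda x[t]} \defeq \lambda \circ \widehat{\sem{t}}$; and for the exchange rule, precompose with the symmetry $1$-cell of $\mathcal{K}$ transposing the two adjacent tensor factors. (This last clause is exactly what forces $\GFlin(z,x)$ to be exponential in $x$.)

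Before verifying anything else I would check that $\sem{t}$ is independent of the chosen derivation. In the linear setting $t$ together with the ordered context $\Gamma$ determines its derivation up to the order of independent application/abstraction steps and up to insertion of exchange steps, so well-definedness reduces to the statement that any two $1$-cells built purely from structural symmetry, associativity, and unit data between the same source and target agree --- a coherence theorem for symmetric monoidal bicategories. I expect this bookkeeping to be the main obstacle: one has to be honest about the fact that, since $\mathcal{K}(X \otimes U, U) \simeq \mathcal{K}(X, V)$ is an equivalence rather than an isomorphism, the transpose $\widehat{(-)}$ is only defined up to a coherent invertible $2$-cell, and that $U^{\otimes k}$ carries an implicit bracketing; this is precisely the place where one genuinely needs to work carefully in a bicategory rather than hand-wave.

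Granting well-definedness, invariance under $\alpha$-equivalence is then essentially immediate: the construction refers to the context only as an ordered list of slots and never to variable names --- in particular the abstraction clause always discharges the last slot of the extended context --- so neither a change of free variable nor a change of bound variable affects $\sem{t}$.

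For the remaining claim I would first prove, by induction on $t$, a substitution lemma: if $\Gamma_1, x, \Gamma_2 \vdash t$ and $\Delta \vdash u$, then $\sem{t\{u/x\}}$ (in the context $\Gamma_1,\Delta,\Gamma_2$) is the composite of $\sem{t}$ with $\sem{u}$ plugged into the $x$-slot, up to the evident symmetries; the base case $t = x$ is just $\mathrm{ev} \circ (\widehat{f} \otimes 1_U) = f$, which is the universal property of the internal hom. Given this, $\beta$ is handled by expanding $\sem{(\lambda x[t])(u)}$ so that the subdiagram $@ \circ \lambda$ appears explicitly, rewriting it to $1_V$ via the $2$-cell $\beta$, and then recognizing what remains as the composite furnished by the substitution lemma, i.e.\ as $\sem{t\{u/x\}}$ --- yielding a $2$-cell $\sem{(\lambda x[t])(u)} \Rightarrow \sem{t\{u/x\}}$. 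Dually, expanding $\sem{\lambda x[t(x)]}$ and simplifying the transpose via the universal property rewrites it as $\lambda \circ @ \circ \sem{t}$, whence the $2$-cell $\eta$ gives a $2$-cell $\sem{t} \Rightarrow \sem{\lambda x[t(x)]}$. In each case the output is a $2$-cell of $\mathcal{K}$ oriented along the rewrite, and the zig-zag identities of the reflexive object are exactly what make these witnesses coherent (so that, for instance, an $\eta$-expansion cancelled by the matching $\beta$-reduction is the identity $2$-cell).
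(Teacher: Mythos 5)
The paper never actually proves this statement: it is labelled a \emph{Claim} and asserted without proof, with the surrounding text only indicating the intended interpretation informally through the string-diagram pictures of the $@$- and $\lambda$-nodes and the $\eta$/$\beta$ rewrites (and pointing to Scott, Hyland, Jacobs, and Seely for the general framework). So there is no official argument to match yours against; what you have written is the standard categorical-semantics proof, and it is the algebraic counterpart of exactly what the paper draws. Your clauses are the right ones: the variable rule as $1_U$, application as evaluation against $@\circ\sem{t}$, abstraction as $\lambda$ postcomposed with the transpose across $\mathcal{K}(X\otimes U,U)\simeq\mathcal{K}(X,V)$, and exchange as the symmetry; the orientations of your $2$-cells agree with the paper's, since its $\beta:@\circ\lambda\Rightarrow 1_V$ collapses the redex in $\sem{(\lambda x[t])(u)}$ and its $\eta:1_U\Rightarrow\lambda\circ @$ points along the expansion $t\to\lambda x[t(x)]$. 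You are also right to single out well-definedness as the genuine work: because the transpose is only determined up to coherent invertible $2$-cell and $U^{\otimes k}$ carries an implicit bracketing, independence of the chosen derivation rests on coherence for symmetric monoidal bicategories, and a fully rigorous treatment would have to spell out the substitution lemma and the zig-zag identities at the level of $2$-cells. The paper sidesteps all of this by treating the statement as folklore; your proposal supplies the missing argument in essentially the form one would expect it to take.
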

\noindent
To make the connection between reflexive objects and the ``proof-nets'' for lambda terms, let's begin by observing that a special kind of smc bicategory is a \emph{compact closed} bicategory \cite{stay2013}, where the internal hom can be defined in terms of the tensor and dualization: $\impR[U]{U} \cong U \otimes U^*$.
There is a fairly standard set of conventions for drawing morphisms of compact closed (bi)categories as string diagrams \cite{selinger-survey,stay2013}, typically using orientations on the ``wires'' to distinguish between an object $A$ and its dual $A^*$.
Applying these conventions to the data of a reflexive object in a compact closed bicategory, the 1-cells $@ : U \to U \otimes U^*$ and $\lambda : U \otimes U^* \to U$ get drawn (running down the page) as nodes of the shape
$$
\vcenter{\hbox{\begin{tikzpicture}
  \node (fn) {};
  \node (app) [below=1em of fn] {$@$};
  \node (cont) [below left=1em of app] {};
  \node (arg) [below right=1em of app] {};
  \draw[->-] (fn.center) to (app);
  \draw[->-] (app) to (cont.center);  
  \draw[->-] (arg.center) to (app);  
\end{tikzpicture}}}
\qquad\text{and}\qquad
\vcenter{\hbox{\begin{tikzpicture}
  \node (root) {};
  \node (lam) [above=1em of root] {$\lambda$};
  \node (var) [above right=1em of lam] {};
  \node (body) [above left=1em of lam] {};
  \draw[->-] (lam) to (root.center);
  \draw[->-] (lam) to (var.center);  
  \draw[->-] (body.center) to (lam);  
\end{tikzpicture}}}
$$
while the 2-cells $\eta$ and $\beta$ become rewriting rules
$$
\vcenter{\hbox{
\begin{tikzpicture}
  \node (in) {};
  \node (out) [below=5em of in] {};
  \draw[->-] (in.center) to (out.center);
\end{tikzpicture}}}
\qquad\overset\eta\Longrightarrow\qquad
\vcenter{\hbox{
\begin{tikzpicture}
  \node (fn) {};
  \node (app) [below=1em of fn] {$@$};
  \node (lam) [below=2em of app] {$\lambda$};
  \node (root) [below=1em of lam]{};
  \draw[->-] (fn.center) to (app);
  \draw[->-,bend right=90] (app) to (lam);  
  \draw[->-,bend right=90] (lam) to (app);  
  \draw[->-] (lam) to (root.center);
\end{tikzpicture}}}
\qquad\qquad\qquad
\vcenter{\hbox{
\begin{tikzpicture}
  \node (lam) {$\lambda$};
  \node (var) [above right=1em of lam] {};
  \node (body) [above left=1em of lam] {};
  \node (app) [below=2em of lam] {$@$};
  \node (cont) [below left=1em of app] {};
  \node (arg) [below right=1em of app] {};
  \draw[->-] (body.center) to (lam);
  \draw[->-] (lam) to (var.center);
  \draw[->-] (lam) to (app);  
  \draw[->-] (app) to (cont.center);  
  \draw[->-] (arg.center) to (app);
\end{tikzpicture}}}
\qquad\overset\beta\Longrightarrow\qquad
\vcenter{\hbox{
\begin{tikzpicture}
  \node (in1) {};
  \node (in2) [right=1.5em of in1] {};
  \node (out1) [below=5em of in1] {};
  \node (out2) [below=5em of in2] {};
  \draw[->-] (in1.center) to (out1.center);
  \draw[->-] (out2.center) to (in2.center);
\end{tikzpicture}}}
$$
with the zig-zag identities expressing a coherence condition on these rewriting rules.
For ease of reference, we'll also give names (adopted from Mairson \shortcite{mairson2002dilbert}) to the different wires positioned around the 1-cells: running clockwise around an $@$-node, the incoming wire at the top is called the \definand{function} port, followed by the (incoming) \definand{argument} and (outgoing) \definand{continuation}, and running counterclockwise around a $\lambda$-node, the outgoing wire at the bottom is called the \definand{root} port, followed by the (outgoing) \definand{parameter} and (incoming) \definand{body}.\footnote{The reader might suspect that there is some degree of arbitrariness in these layout conventions, for example if we had used left implication $\impL[U]{U}$ instead of right implication $\impR[U]{U}$ in \Cref{defn:reflexive-object}.
The bijection we present in \Cref{sec:linear-trivalent,sec:trivalent-linear} works for any layout convention, and will only rely on having a particular convention fixed.
I should also point out this way of ordering the wires corresponds to what was briefly discussed as the ``RL'' convention in \cite[Zeilberger and Giorgetti (2015, \S3.1)]{zg2015rpmnpt}, rather than the ``LR'' convention which was mainly used in that paper.}

As an example, the closed linear lambda term $t = \lambda x[\lambda y[x(\lambda z[y(z)])]]$ we considered above denotes a morphism $\sem{t} : 1 \to U$ in any smc bicategory with a reflexive object $U$.
Drawing this 1-cell as a string diagram using the compact closed conventions we get the following picture:
\begin{equation}\label{eqn:lambda-string-diagram}
\vcenter{\hbox{
\begin{tikzpicture}
  \node (root) {};
  \node (lamx) [above=1em of root] {$\lambda$};
  \node (lamy) [above left=1em of lamx] {$\lambda$};
  \node (appx) [above=2em of lamy] {$@$};
  \node (lamz) [right=2em of appx] {$\lambda$};
  \node (appy) [above=1em of lamz] {$@$};
  \node (vary) [above right=0.5em and 0em of appy] {};
  \node (dumy) [right=2em of lamz] {};
  \node (varx) [above right=4em and 1em of appx] {};
  \draw[->-] (lamx) to (root.center);
  \draw[->-] (lamy) to (lamx);
  \draw[->-,bend right=60] (appx) to (lamy);
  \draw[->-,bend left=45] (lamz) to (appx);
  \draw[->-,bend right=20,rounded corners=5pt] (lamy) to (dumy.center) to (vary.center) to (appy);
  \draw[->-,bend right=80,min distance=3cm,rounded corners=5pt] (lamx) to (varx.center) to [bend right=30,min distance=0em] (appx);
  \draw[->-,bend right=90] (appy) to (lamz);  
  \draw[->-,bend right=90] (lamz) to (appy);  
\end{tikzpicture}}}
\end{equation}
Observe that this diagram is essentially the same as the one in (\ref{eqn:lambda-graph}), just turned upside down and with explicit orientations on the wires.
The correspondence with the original linear lambda term can be made a bit more evident by labelling the wires with \emph{subterms} of $t$:
$$
\begin{tikzpicture}
  \node (root) {};
  \node (lamx) [above=1em of root] {$\lambda$};
  \node (lamy) [above left=1em of lamx] {$\lambda$};
  \node (appx) [above=2em of lamy] {$@$};
  \node (lamz) [right=2em of appx] {$\lambda$};
  \node (appy) [above=1em of lamz] {$@$};
  \node (vary) [above right=0.5em and 0em of appy] {};
  \node (dumy) [right=2em of lamz] {};
  \node (varx) [above right=4em and 1em of appx] {};
  \draw[->-] (lamx) to node [left] {\tiny$t$}  (root.center);
  \draw[->-] (lamy) to node [left] {\tiny$\lambda y[x(\lambda z[y(z)])]$} (lamx);
  \draw[->-,bend right=60] (appx) to node [left] {\tiny$x(\lambda z[y(z)])$}  (lamy);
  \draw[->-,bend left=45] (lamz) to node [below=-3pt] {\tiny$\lambda z[y(z)]$} (appx);
  \draw[->-,bend right=20,rounded corners=5pt] (lamy) to node [right=1pt] {\tiny$y$} (dumy.center) to (vary.center) to (appy);
  \draw[->-,bend right=80,min distance=3cm,rounded corners=5pt] (lamx) to node [right] {\tiny$x$} (varx.center) to [bend right=30,min distance=0em] (appx);
  \draw[->-,bend right=90] (appy) to node [left=-3pt] {\tiny$y(z)$} (lamz);  
  \draw[->-,bend right=90] (lamz) to node [right=-2pt] {\tiny$z$} (appy);  
\end{tikzpicture}
$$
One thing it is important to point out is that not every physical combination of $@$-nodes and $\lambda$-nodes represents a linear lambda term, a consequence of the fact that not every smc (bi)category is compact closed.
For instance, the diagrams
$$
\vcenter{\hbox{
\begin{tikzpicture}
  \node (root) {};
  \node (lam) [above=1em of root] {$\lambda$};
  \node (lam2) [above left=1em of lam] {$\lambda$};
  \node (app) [right=2em of lam] {$@$};
  \node (xl) [above=0.5em of lam2] {};
  \node (xr) [below=0.5em of app] {};
  \draw[->-] (lam) to (root.center);
  \draw[->-,bend left=60] (lam) to (app);  
  \draw[->-] (lam2) to (lam);  
  \draw[->-,bend right=70,looseness=1] (lam2) to (xl.center) to (lam2);  
  \draw[->-,bend right=70,looseness=1] (app) to (xr.center) to (app);  
\end{tikzpicture}}}
\qquad\text{and}\qquad
\vcenter{\hbox{
\begin{tikzpicture}
  \node (root) {};
  \node (appx) [above right=1em and 0em of root] {$@$};
  \node (lamx) [right=2em of appx] {$\lambda$};
  \node (lamy) [above left=1em of lamx] {$\lambda$};
  \node (vary) [above=0.5em of lamy] {};
  \node (varx) [above=3em of appx] {};
  \draw[->-] (appx) to (root.center);
  \draw[->-,bend left=60] (lamx) to (appx);  
  \draw[->-] (lamy) to (lamx);  
  \draw[->-,bend right=70,looseness=1] (lamy) to (vary.center) to (lamy);  
  \draw[->-,bend right=70,looseness=1,rounded corners=5pt] (lamx) to (varx.center) to [bend right=10] (appx);  
\end{tikzpicture}}}
$$
do not correspond to the interpretation of a linear lambda term.
(This phenomenon is well-known in proof-nets, and is often analyzed by considering additional ``correctness criteria'' for the diagrams.)
Nonetheless, the interpretation of linear lambda terms using reflexive objects in smc bicategories is complete in the following sense:
\begin{claim}[Completeness] \label{claim:completenes}
  There is a smc bicategory $\mathcal{K}_\Lambda$ equipped with a reflexive object $U$, such that every 1-cell $f : U^{\otimes k} \to U$ is the interpretation $f = \sem{t}$ of a unique (up to $\alpha$-equivalence) linear lambda term $t$ with $k$ free variables, and such that there is a 2-cell $\sem{t_1} \Rightarrow \sem{t_2}$ if and only if $t_1$ can be rewritten to $t_2$ (up to $\alpha$-equivalence) by a series of $\beta$-reductions and $\eta$-expansions.
\end{claim}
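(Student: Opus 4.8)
The plan is to realise $\mathcal{K}_\Lambda$ as a \emph{term model}: a syntactic bicategory whose hom-categories are built directly out of linear $\lambda$-terms and their $\beta\eta$-rewrites, so that the interpretation $\sem{-}$ of \Cref{claim:soundness} becomes essentially the identity and the content of the claim is relocated into (a) checking that the term model really is an smc bicategory with a reflexive object, (b) a definability lemma, and (c) a recognition of its 2-cells. Concretely, I would take the objects of $\mathcal{K}_\Lambda$ to be the formal types generated from a single constant $U$, the unit $1$, the tensor $\otimes$, and a binary internal-hom former $A \ImpR B$; the 1-cells $A \to B$ to be the $\alpha$-equivalence classes of terms $y : A \vdash M : B$ in the evident multiplicative linear $\lambda$-calculus over these types, extended with the usual constructs for $\otimes$ and $1$ and with two term constants playing the roles of the 1-cells $@ : U \to (\impR[U]{U})$ and $\lambda : (\impR[U]{U}) \to U$ of \Cref{defn:reflexive-object}; and the 2-cells $M \Rightarrow N$ to be rewrite sequences built from whiskered copies of two generating rewrites $x \rightsquigarrow \lambda(@\,x)$ and $@(\lambda\,z) \rightsquigarrow z$, identified up to the evident permutation equivalences, with composition of 1-cells given by substitution. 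A design point that matters: the structural $\beta\eta$-laws for $\otimes$, $1$ and $\ImpR$ should be imposed \emph{strictly}, as equalities of 1-cells, so that only the two reflexive-object rewrites survive as genuine 2-cells; then the 1-cells $U^{\otimes k} \to U$ that remain are exactly the $\alpha$-classes of ordinary linear $\lambda$-terms. (Alternatively one keeps $\mathcal{K}_\Lambda$ weak and reads ``$f = \sem{t}$'' up to the invertible structural 2-cells.)

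I would then verify, in order: that $\mathcal{K}_\Lambda$ is a bicategory (associativity and unit coherence for substitution, and the interchange law — all standard for term models of type theories with rewrites); that $\otimes$, $1$ and $\ImpR$ make it symmetric monoidal closed, the closing equivalence $\mathcal{K}_\Lambda(X \otimes A, B) \simeq \mathcal{K}_\Lambda(X, \impR[A]{B})$ being currying and uncurrying of terms; and that the constants $@$, $\lambda$ together with the two generating rewrites, read as 2-cells $\eta : 1_U \Rightarrow \lambda \circ @$ and $\beta : @ \circ \lambda \Rightarrow 1_{\impR[U]{U}}$, form an adjunction $@ \dashv \lambda$ satisfying the zig-zag identities, so that $U$ is a reflexive object. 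At this point \Cref{claim:soundness} supplies $\sem{-}$, and by construction it sends a linear $\lambda$-term $(\Gamma, t)$ with $\Gamma = x_1,\dots,x_k$ to the 1-cell $U^{\otimes k} \to U$ named by $t$, under the translation of an application $t(u)$ as $\mathrm{ev} \circ (@\,t \otimes u)$ and of an abstraction $\lambda z[t]$ as $\lambda \circ \Lambda(t)$.

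The heart of the proof is then the definability lemma: every 1-cell $f : U^{\otimes k} \to U$ equals $\sem{t}$ for a $t$ unique up to $\alpha$-equivalence. Uniqueness is the easy half: the translation above is injective on $\alpha$-classes (one reads the $\lambda$-term back off the 1-cell by inspecting its outermost constructor and recursing), so $\sem{t_1} = \sem{t_2}$ forces $t_1 \equiv_\alpha t_2$. For existence, one argues that the structural $\beta\eta$-rewrites for $\otimes$, $1$ and $\ImpR$ — which here are equalities of 1-cells and hence do not change $f$ — normalise every $x_1:U,\dots,x_k:U \vdash M : U$ into a form in which the hom type $\impR[U]{U}$ occurs only transiently, introduced by an $@$ or a currying and immediately consumed by a $\lambda$ or an $\mathrm{ev}$; such a normal form is literally the translation of a linear $\lambda$-term. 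This is a cut-elimination-style argument and is the step I expect to be the main obstacle, for two reasons: one must establish termination and confluence of the structural normalisation on 1-cells, and one must carry it out compatibly with the bicategorical coherence — in particular, justify the strictification of the symmetric monoidal closed structure used to treat those rewrites as 1-cell equalities (or, in the weak alternative, keep track of the structural isos throughout).

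Finally, for the 2-cell correspondence: if $t_1 \to_\beta t_2$ by contracting a redex inside a context $C[-]$, then whiskering the reflexive-object 2-cell $\beta$ by the translation of $C[-]$, together with the substitution compatibility of $\sem{-}$, produces a 2-cell $\sem{t_1} \Rightarrow \sem{t_2}$; an $\eta$-expansion step is handled the same way using $\eta$; and vertical composition handles an entire rewrite sequence, giving the ``if'' direction. Conversely, every 2-cell of $\mathcal{K}_\Lambda$ is by construction a composite (horizontal and vertical) of whiskered copies of $\eta$ and $\beta$, the structural 2-cells having been rendered invisible by strictification; transporting such a composite through the definability normal form displays it as a sequence of $\eta$-expansions and $\beta$-reductions on linear $\lambda$-terms. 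Since the generating 2-cells point consistently ``forward'' ($\eta$ expanding, $\beta$ reducing) and are not made invertible by the zig-zag quotient, what one recovers is exactly the directed rewriting relation of the statement rather than $\beta\eta$-equality.
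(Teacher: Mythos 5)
Your construction is a legitimate route to the claim, but it is genuinely different from the one the paper has in mind, and it concentrates all of the difficulty in the one place where the paper's construction has essentially none. The paper (following Hyland's treatment of Scott's Representation Theorem) takes $\mathcal{K}_\Lambda$ to be a presheaf bicategory $[\mathcal{C}^\op,\Cat]$, where $\mathcal{C}$ has contexts as 0-cells, tuples of linear lambda terms as 1-cells, and rewritings as 2-cells; the smc structure comes from Day convolution, and the reflexive object is the representable presheaf $U$ at a singleton context. The point of that choice is that the Yoneda lemma does the work you assign to your ``definability lemma'': since Day convolution makes the Yoneda embedding strong monoidal, $U^{\otimes k}$ is equivalent to the representable at a $k$-variable context, and hence $\mathcal{K}_\Lambda(U^{\otimes k},U)$ is, by construction, exactly the category of linear lambda terms with $k$ free variables and rewrite sequences between them. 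Both halves of the claim --- existence and uniqueness of $t$ with $f = \sem{t}$, and the characterization of 2-cells as $\beta$/$\eta$ rewrite sequences --- are then immediate, and the residual work is verifying that $\mathcal{C}$ is a symmetric monoidal bicategory, that Day convolution yields a biclosed structure, and that the representable carries the required adjunction (note that $\impR[U]{U}$ is not itself representable, and the paper explicitly warns that $[\mathcal{C}^\op,\Cat]$ is not compact closed).

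By contrast, your term model (in effect the free smc bicategory on an object with an adjunction to its internal endomorphism object) must earn definability by a cut-elimination/normalization argument over the full typed calculus with $\otimes$, $1$ and $\ImpR$, together with a strictification or coherence argument for smc bicategories to justify treating the structural $\beta\eta$-laws as equalities of 1-cells. You correctly flag this as the main obstacle, but as written it is only a sketch, and on your route it is the entire content of the theorem: without it you have neither existence of $t$ nor the ``only if'' half of the 2-cell correspondence, which you obtain by transporting 2-cells through the normal form. So the proposal is a plausible alternative architecture rather than a complete proof; if you pursue it, the normalization lemma (termination, confluence, and compatibility with bicategorical coherence) must be carried out in full. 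To be fair, the paper's own argument is also only a sketch, but it places the load on standard machinery (Day convolution and Yoneda) rather than on a new syntactic normalization theorem.
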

\noindent
The proof essentially follows Hyland's analysis of Scott's Representation Theorem \cite{hyland-lambda-calculus}, replacing cartesian closed categories by smc bicategories.
The idea is to take $\mathcal{K}_\Lambda$ as a presheaf bicategory $[\mathcal{C}^\op,\Cat]$, where $\mathcal{C}$ is a symmetric monoidal bicategory whose 0-cells are contexts, 1-cells are tuples of linear lambda terms, and 2-cells are rewritings between tuples.
The smc structure on $[\mathcal{C}^\op,\Cat]$ is defined by Day convolution, and the reflexive object is constructed as the representable presheaf for a singleton context.
(Note that $[\mathcal{C}^\op,\Cat]$ is \emph{not} compact closed.)

\section{From linear lambda terms to rooted trivalent maps}
\label{sec:linear-trivalent}

Once we view linear lambda calculus through the lens of string diagrams, it is pretty clear how to turn any closed linear lambda term into a rooted trivalent map: just look at its string diagram and forget the distinction between $@$-nodes and $\lambda$-nodes, as well as the orientations on the wires.
Here we apply this transformation on $\lambda x[\lambda y[x(\lambda z[y(z)])]]$ and its corresponding string diagram (\ref{eqn:lambda-string-diagram}):
$$
\vcenter{\hbox{\scalebox{0.9}{
\begin{tikzpicture}
  \node (root) {};
  \node (lamx) [above=1em of root] {$\lambda$};
  \node (lamy) [above left=1em of lamx] {$\lambda$};
  \node (appx) [above=2em of lamy] {$@$};
  \node (lamz) [right=2em of appx] {$\lambda$};
  \node (appy) [above=1em of lamz] {$@$};
  \node (vary) [above right=0.5em and 0em of appy] {};
  \node (dumy) [right=2em of lamz] {};
  \node (varx) [above right=4em and 1em of appx] {};
  \draw[->-] (lamx) to (root.center);
  \draw[->-] (lamy) to (lamx);
  \draw[->-,bend right=60] (appx) to (lamy);
  \draw[->-,bend left=45] (lamz) to (appx);
  \draw[->-,bend right=20,rounded corners=5pt] (lamy) to (dumy.center) to (vary.center) to (appy);
  \draw[->-,bend right=80,min distance=3cm,rounded corners=5pt] (lamx) to (varx.center) to [bend right=30,min distance=0em] (appx);
  \draw[->-,bend right=90] (appy) to (lamz);  
  \draw[->-,bend right=90] (lamz) to (appy);  
\end{tikzpicture}}}}
\qquad\mapsto\qquad
\vcenter{\hbox{\scalebox{0.9}{
\begin{tikzpicture}
  \node (root) {};
  \node (lamx) [above=1em of root] {$\bullet$};
  \node (lamy) [above left=1em of lamx] {$\bullet$};
  \node (appx) [above=2em of lamy] {$\bullet$};
  \node (lamz) [right=2em of appx] {$\bullet$};
  \node (appy) [above=1em of lamz] {$\bullet$};
  \node (vary) [above right=0.5em and 0em of appy] {};
  \node (dumy) [right=2em of lamz] {};
  \node (varx) [above right=4em and 1em of appx] {};
  \draw (lamx.center) to (root.center);
  \draw (lamy.center) to (lamx.center);
  \draw[bend right=60] (appx.center) to (lamy.center);
  \draw[bend left=45] (lamz.center) to (appx.center);
  \draw[bend right=20,rounded corners=5pt] (lamy.center) to (dumy.center) to (vary.center) to (appy.center);
  \draw[bend right=80,min distance=3cm,rounded corners=5pt] (lamx.center) to (varx.center) to [bend right=30,min distance=0em] (appx.center);
  \draw[bend right=90] (appy.center) to (lamz.center);  
  \draw[bend right=90] (lamz.center) to (appy.center);  
\end{tikzpicture}}}}
$$
Even though we identify $@$-nodes and $\lambda$-nodes, it is important that we take care to remember the \emph{ordering} of the wires around each node, since we are interested in obtaining a combinatorial map rather than an abstract trivalent graph.
Strictly speaking the diagram on the right is not a trivalent map in the sense of \Cref{def:trivalent} since it has a dangling edge, but it can be interpreted as a rooted trivalent map (in the sense of \Cref{def:rooted}) by reading the outgoing trivalent vertex as a ``normal vector'' to the root dart of a map with that vertex smoothed out:
$$
\vcenter{\hbox{\scalebox{0.9}{
\begin{tikzpicture}
  \node (root) {};
  \node (lamx) [above=1em of root] {$\bullet$};
  \node (lamy) [above left=1em of lamx] {$\bullet$};
  \node (appx) [above=2em of lamy] {$\bullet$};
  \node (lamz) [right=2em of appx] {$\bullet$};
  \node (appy) [above=1em of lamz] {$\bullet$};
  \node (vary) [above right=0.5em and 0em of appy] {};
  \node (dumy) [right=2em of lamz] {};
  \node (varx) [above right=4em and 1em of appx] {};
  \draw (lamx.center) to (root.center);
  \draw (lamy.center) to (lamx.center);
  \draw[bend right=60] (appx.center) to (lamy.center);
  \draw[bend left=45] (lamz.center) to (appx.center);
  \draw[bend right=20,rounded corners=5pt] (lamy.center) to (dumy.center) to (vary.center) to (appy.center);
  \draw[bend right=80,min distance=3cm,rounded corners=5pt] (lamx.center) to (varx.center) to [bend right=30,min distance=0em] (appx.center);
  \draw[bend right=90] (appy.center) to (lamz.center);  
  \draw[bend right=90] (lamz.center) to (appy.center);  
\end{tikzpicture}}}}
\qquad\leftrightarrow\qquad
\vcenter{\hbox{\scalebox{0.9}{
\begin{tikzpicture}
  \node (lamx-inv) {};
  \node (lamy) [above left=1em of lamx-inv] {$\bullet$};
  \node (appx) [above=2em of lamy] {$\bullet$};
  \node (lamz) [right=2em of appx] {$\bullet$};
  \node (appy) [above=1em of lamz] {$\bullet$};
  \node (vary) [above right=0.5em and 0em of appy] {};
  \node (dumy) [right=2em of lamz] {};
  \node (varx) [above right=4em and 1em of appx] {};
  \draw[bend right=60] (appx.center) to (lamy.center);
  \draw[bend left=45] (lamz.center) to (appx.center);
  \draw[bend right=20,rounded corners=5pt] (lamy.center) to (dumy.center) to (vary.center) to (appy.center);
  \draw[!->-=.2,bend right=10,rounded corners=5pt] (lamy.center) to (lamx-inv.center) to [bend right=80,min distance=2.5cm,rounded corners=5pt] (varx.center) to [bend right=30,min distance=0em] (appx.center);
  \draw[bend right=90] (appy.center) to (lamz.center);  
  \draw[bend right=90] (lamz.center) to (appy.center);  
\end{tikzpicture}}}}
$$
Actually, there is a hiccup in performing this last step for the identity term $\Icomb = \lambda x[x]$,
$$
\vcenter{\hbox{
\begin{tikzpicture}
  \node (root) {};
  \node (lamx) [above=1em of root] {$\lambda$};
  \node (varx) [above=0.5em of lamx] {};
  \draw[->-] (lamx) to (root.center);
  \draw[->-,bend right=70,looseness=1] (lamx) to (varx.center) to (lamx);  
\end{tikzpicture}}}
\qquad\mapsto\qquad
\vcenter{\hbox{
\begin{tikzpicture}
  \node (root) {};
  \node (lamx) [above=1em of root] {$\bullet$};
  \node (varx) [above=0.7em of lamx] {};
  \draw (lamx.center) to (root.center);
  \draw[bend right=90,looseness=1] (lamx.center) to (varx.center) to (lamx.center);  
\end{tikzpicture}}}
\qquad\leftrightarrow\qquad
\vcenter{\hbox{
\begin{tikzpicture}
  \node (lamx-inv) {};
  \node (varx) [above=0.7em of lamx-inv] {};
  \draw[!->-=0.2,bend right=90,looseness=1] (lamx-inv.center) to (varx.center) to (lamx-inv.center);  
\end{tikzpicture}}}
$$
since the no-vertex map is not technically a rooted map (again in the sense of \Cref{def:rooted}), although studies of the combinatorics of rooted maps often treat the empty map as an exceptional case \cite{tutte1968}.

Both of these minor technical issues will be resolved smoothly once we adopt the more general notion of rooted trivalent map to be described shortly.
The real reason for considering this more general notion, though, is that we would also like to interpret linear lambda terms with free variables as rooted trivalent maps.
Consider the term $x(\lambda z[y(z)])$ with free variables $x$ and $y$, whose string diagram corresponds to a subdiagram of (\ref{eqn:lambda-string-diagram}):
\begin{equation}\label{eqn:subdiagram}
\vcenter{\hbox{\scalebox{0.9}{
\begin{tikzpicture}
  \node (root) {};
  \node (appx) [above right=1.5em and 0em of root] {$@$};
  \node (lamz) [right=2em of appx] {$\lambda$};
  \node (appy) [above=1em of lamz] {$@$};
  \node (vary) [above=1em of appy] {};
  \node (varx) [above=3.4em of appx] {};
  \draw[->-] (appx) to (root.center);
  \draw[->-,bend left=45] (lamz) to (appx);
  \draw[->-] (vary.center) to (appy);
  \draw[->-] (varx.center) to (appx);
  \draw[->-,bend right=90] (appy) to (lamz);  
  \draw[->-,bend right=90] (lamz) to (appy);  
\end{tikzpicture}}}}
\end{equation}
Identifying $@$-nodes and $\lambda$-nodes in (\ref{eqn:subdiagram}) yields a trivalent map with three dangling edges,
$$
\vcenter{\hbox{\scalebox{0.9}{
\begin{tikzpicture}
  \node (root) {};
  \node (appx) [above right=1.5em and 0em of root] {$\bullet$};
  \node (lamz) [right=2em of appx] {$\bullet$};
  \node (appy) [above=1em of lamz] {$\bullet$};
  \node (vary) [above=1em of appy] {};
  \node (varx) [above=3.2em of appx] {};
  \draw (appx.center) to (root.center);
  \draw[bend left=45] (lamz.center) to (appx.center);
  \draw (vary.center) to (appy.center);
  \draw (varx.center) to (appx.center);
  \draw[bend right=90] (appy.center) to (lamz.center);  
  \draw[bend right=90] (lamz.center) to (appy.center);  
\end{tikzpicture}}}}
$$
but since it is no longer clear from the diagram which dangling edge marks the root,
we attach an extra univalent vertex to one of them (turning it into a full edge):
$$
\vcenter{\hbox{\scalebox{0.9}{
\begin{tikzpicture}
  \node (root) {};
  \node (appx) [above right=1.5em and 0em of root] {$@$};
  \node (lamz) [right=2em of appx] {$\lambda$};
  \node (appy) [above=1em of lamz] {$@$};
  \node (vary) [above=1em of appy] {};
  \node (varx) [above=3.4em of appx] {};
  \draw[->-] (appx) to (root.center);
  \draw[->-,bend left=45] (lamz) to (appx);
  \draw[->-] (vary.center) to (appy);
  \draw[->-] (varx.center) to (appx);
  \draw[->-,bend right=90] (appy) to (lamz);  
  \draw[->-,bend right=90] (lamz) to (appy);  
\end{tikzpicture}}}}
\qquad\mapsto\qquad
\vcenter{\hbox{\scalebox{0.9}{
\begin{tikzpicture}
  \node (root) {$\bullet$};
  \node (appx) [above right=1.5em and 0em of root] {$\bullet$};
  \node (lamz) [right=2em of appx] {$\bullet$};
  \node (appy) [above=1em of lamz] {$\bullet$};
  \node (vary) [above=1em of appy] {};
  \node (varx) [above=3.2em of appx] {};
  \draw (appx.center) to (root.center);
  \draw[bend left=45] (lamz.center) to (appx.center);
  \draw (vary.center) to (appy.center);
  \draw (varx.center) to (appx.center);
  \draw[bend right=90] (appy.center) to (lamz.center);  
  \draw[bend right=90] (lamz.center) to (appy.center);  
\end{tikzpicture}}}}
$$
By considering the output of this transformation on linear lambda terms with any number of free variables we arrive at the following generalization of \Cref{def:rooted}:
\begin{definition}\label{def:rtmb}
  For any $\gset{G}$ $X$ and $g\in G$, let $\fixpt{g}(X) \defeq \set{x \in X \mid g * x = x}$, and again take $\trigp \defeq \freegp{v,e \mid v^3 = e^2 = 1}$.
  A \definand{rooted trivalent map with boundary} is a transitive $\gset\trigp$ $M$ with a distinguished element $r\in M$ and a list of distinct elements $x_1,\dots,x_k \in M$, such that $\fixpt{v}(M) = \set{r}$ and $\fixpt{e}(M) = \set{x_1,\dots,x_k}$.
  We refer to the unique $v$-fixed point as the \definand{root} $r(M)$ of the map, the ordered list of $e$-fixed points as the \definand{boundary} $\Gamma(M)$ of the map, and to the integer $k$ (= the number of $e$-fixed points) as the \definand{degree} of the boundary.
  A rooted trivalent map with boundary of degree 0 is called a \definand{closed} rooted trivalent map.
\end{definition}
\noindent
From now on, when we say ``rooted trivalent map'' without qualification we mean rooted trivalent map with boundary in the sense of \Cref{def:rtmb}, referring to the sense of \Cref{def:rooted} as ``classical rooted trivalent map''.
\begin{proposition}\label{prop:bijroot}
  For all $n>0$, there is a bijection between closed rooted trivalent maps with $n+1$ trivalent vertices and classical rooted trivalent maps with $n$ trivalent vertices.
  This extends to a bijection for all $n \ge 0$ if the empty $\gset\trigp$ is admitted as a classical rooted trivalent map.
\end{proposition}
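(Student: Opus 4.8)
The plan is to realise the bijection through a pair of mutually inverse surgery operations on combinatorial maps: ``smoothing out the root vertex'' in one direction, and ``subdividing the root edge and attaching a pendant vertex'' in the other.

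First I would inspect the local structure at the root of a closed rooted trivalent map with boundary $M$ in the sense of \Cref{def:rtmb}. Since $\fixpt{v}(M) = \set{r}$, the root $r$ carries a single dart, and its $e$-partner $b \defeq e(r)$ is distinct from $r$ (because $e$ is fixed-point-free) and hence lies in a $v$-orbit of size three; call this trivalent vertex $w$, with darts $b$, $v(b)$, $v^2(b)$. The forward map $\Phi$ deletes the four darts $r, b, v(b), v^2(b)$ --- that is, it erases both $r$ and $w$ along with the edge between them --- reglues $e(v(b))$ and $e(v^2(b))$ into a single edge, and designates one of these two darts, say $e(v(b))$, as the new root dart. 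The cyclic order at $w$ is exactly the data needed to orient the reglued edge, which realises the ``normal vector to the root dart'' picture described informally earlier. I would then verify that $\Phi(M)$ is again a transitive $\gset\trigp$ on which $v$ and $e$ act without fixed points (transitivity survives because any walk in $M$ passing through $w$ or $r$ can be rerouted or truncated in $\Phi(M)$), so that $\Phi(M)$ is a classical rooted trivalent map in the sense of \Cref{def:rooted} --- with a single exception: if $e(v(b)) = v^2(b)$ then $\set{r, b, v(b), v^2(b)}$ is already closed under $\trigp$ and therefore equals $M$, which forces $M$ to be the map of the identity term $\Icomb = \lambda x[x]$, with $\Phi(M)$ the no-vertex map.

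For the inverse $\Psi$, given a classical rooted trivalent map $M'$ with root dart $d$ and $d' \defeq e(d)$, I would insert a new trivalent vertex $w$ into the root edge $\set{d, d'}$, with darts $\beta_1, \beta_2, \beta_3$ cyclically ordered so that $e(\beta_2) = d$ and $e(\beta_3) = d'$, and glue a new univalent vertex $r$ to the remaining dart $\beta_1$, taking $r$ as the root. Transitivity of $\Psi(M')$ is clear (we have only subdivided an edge and attached a pendant), and by construction $\fixpt{v}(\Psi(M')) = \set{r}$ while $e$ stays fixed-point-free, so $\Psi(M')$ is a closed rooted trivalent map with boundary. Comparing the two recipes shows $\Phi \circ \Psi = \mathrm{id}$ and $\Psi \circ \Phi = \mathrm{id}$: smoothing the vertex $w$ created by $\Psi$ reglues $d$ to $d'$ and, under the chosen convention, returns $d$ as the root, and symmetrically. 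Both operations are canonical, so they descend to isomorphism classes.

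It then remains to read off the grading. A closed rooted trivalent map with boundary has $m$ trivalent vertices and one univalent vertex, hence $3m+1$ darts, which must be even, so $m$ is odd; $\Phi$ removes exactly one trivalent vertex and $\Psi$ puts it back, so for each $m \ge 2$ the map $\Phi$ restricts to a bijection between closed rooted trivalent maps with $m$ trivalent vertices and classical rooted trivalent maps with $m-1$ trivalent vertices. Setting $n = m - 1$ yields the statement for all $n > 0$ --- when $n$ is odd both families are empty (a classical trivalent map with an odd number of vertices would have an odd number of darts), so there is nothing to check --- and pairing the map of $\Icomb$ with the empty $\gset\trigp$ extends the bijection to $n = 0$. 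I expect the only genuinely delicate point to be fixing the orientation conventions --- which dart at $w$ becomes the classical root, and in which cyclic slot the pendant is reinserted --- precisely enough that $\Phi$ and $\Psi$ are exact inverses; the transitivity and fixed-point bookkeeping, and the isolation of the degenerate $\Icomb$ case, are then routine.
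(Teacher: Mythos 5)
Your proposal is correct and follows essentially the same route as the paper, whose proof of \Cref{prop:bijroot} simply points back to the informal description at the start of \Cref{sec:linear-trivalent} of smoothing out the root trivalent vertex (reading it as a ``normal vector'' to the root dart) and, conversely, subdividing the root edge and attaching a pendant vertex. You have merely made explicit the permutation-level bookkeeping, the convention-matching needed for $\Phi$ and $\Psi$ to be mutual inverses, and the isolation of the $\Icomb$/empty-map case, all of which the paper leaves implicit.
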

\begin{proof}
  As explained in the first paragraph of this section.
\end{proof}
\noindent
Observe that the simplest possible rooted trivalent map with boundary is the singleton $\gset{\trigp}$ $M = \set{x}$ with $r(M) = \Gamma(M) = x$, corresponding to the trivial map
$\vcenter{\hbox{\scalebox{0.6}{
\begin{tikzpicture}
  \node (root) {$\bullet$};
  \node (varx) [above=0.5em of root] {};
  \draw (varx.center) to (root.center);
\end{tikzpicture}}}}$
with no trivalent vertices and one free edge.
With this definition, it is clear how any linear lambda term induces a rooted trivalent map with boundary.
\begin{proposition}\label{prop:linear-trivalent}
  To any linear lambda term with $k$ free variables, $p$ applications and $q$ abstractions there is naturally associated a rooted trivalent map with boundary of degree $k$ and $p+q$ trivalent vertices.
\end{proposition}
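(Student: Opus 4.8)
\noindent
The plan is to carry out the construction in two stages. First, by induction on the derivation of $\Gamma \vdash t$, I build the \emph{string diagram} of $(\Gamma,t)$ in the sense of \Cref{sec:stringdiagrams}: a formal finite graph consisting of the tree of applications and abstractions, with one $@$-node for each application, one $\lambda$-node for each abstraction, together with a back-edge from each bound-variable wire to the $\lambda$-node that binds it, $k$ dangling ``input'' wires for the free variables $x_1,\dots,x_k$, and a single dangling ``output'' (root) wire. The axiom $x \vdash x$ produces a bare wire (input wire identified with output wire); application $\Gamma,\Delta \vdash t(u)$ plugs the output wires of the diagrams of $t$ and $u$ into the function and argument ports of a new $@$-node and takes its continuation port as the new output; abstraction $\Gamma \vdash \lambda x[t]$ plugs the output wire of the diagram of $t$ into the body port of a new $\lambda$-node and routes the $x$-wire (formerly the last input wire) into its parameter port, taking the root port as the new output; the exchange rule does nothing to the diagram. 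This construction is manifestly well-defined and, since renaming of free or bound variables never touches the diagram, depends only on the $\alpha$-equivalence class of $(\Gamma,t)$.

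Second, from the string diagram I read off the rooted trivalent map with boundary by erasing the distinction between $@$- and $\lambda$-nodes and the orientations on the wires, capping the output wire with a fresh univalent vertex $r$, and retaining the $k$ input wires as dangling edges; the cyclic order of the three wires around each node becomes a $3$-cycle of $v$, each internal wire a $2$-cycle of $e$, each input wire an $e$-fixed point, and the capped output wire an $e$-$2$-cycle incident to the $v$-fixed point $r$. I then have to check that this data satisfies \Cref{def:rtmb}: transitivity of the $\gset{\trigp}$ follows from connectedness of the underlying graph, which is preserved at every inductive step (each new $@$- or $\lambda$-node is incident to the pieces it joins, and adding back-edges or free-variable leaves never disconnects anything); $v^3 = e^2 = 1$ since $v$ is a product of $3$-cycles and the single $1$-cycle $(r)$ while $e$ is a product of $2$-cycles and $k$ $1$-cycles; $\fixpt{v}(M) = \set{r}$ because every node coming from the diagram is trivalent so the cap is the only univalent vertex; and $\fixpt{e}(M)$ is exactly the list $x_1,\dots,x_k$ of remaining input wires. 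The two numerical claims are then pure bookkeeping along the induction: the number of trivalent vertices equals the number of $@$-nodes plus the number of $\lambda$-nodes, which is $p + q$ (one new node per application or abstraction step, boundary and vertex counts untouched by exchange), and the degree of the boundary equals the number of surviving input wires, which starts at $1$ for a variable, adds on concatenation, and drops by one for each abstraction, giving exactly $k$.

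I expect the only points requiring a sentence of care to be the degenerate shapes already flagged in the text. For a bare-variable term $x \vdash x$ the output wire \emph{is} an input wire, so capping produces the trivial map $\set{x}$ with $r(M) = \Gamma(M) = x$ — still a legitimate rooted trivalent map with boundary of degree $1$ and $0$ trivalent vertices. For the identity term $\Icomb = \lambda x[x]$ the body wire and the parameter wire coincide, so the abstraction step creates a self-loop at the $\lambda$-node rather than two separate edges, but the result is again a legitimate rooted trivalent map with boundary (one trivalent vertex carrying a loop, with a dangling root edge capped by a univalent vertex), of degree $0$ and with $1 = p + q$ trivalent vertices. Beyond these, the one thing to confirm is genuine independence of the chosen derivation and of the representative of the $\alpha$-class, which reduces to the two observations already made — exchange only permutes the list of input wires, hence affects only the boundary labelling and not the $\gset{\trigp}$, and variable renaming does not affect the diagram at all. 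None of this is deep; the substance of the proposition is the routing of wires through ports described in stage one.
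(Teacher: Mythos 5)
Your proposal is correct and follows essentially the same route as the paper: the paper's own proof is just the two-sentence observation that the string diagram, with nodes merged, orientations forgotten, and a univalent cap on the output wire, is ``manifestly'' a rooted trivalent map with boundary of degree $k$ and $p+q$ trivalent vertices. What you have written is a careful elaboration of exactly that construction (the inductive wiring, the explicit permutations $v$ and $e$, the degenerate cases, and independence of the derivation), all of which the paper leaves implicit.
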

\begin{proof}
Consider the string diagram of the term, which has $k$ incoming wires and one outgoing wire, as well as $p$ $@$-nodes and $q$ $\lambda$-nodes internal to the diagram.
Transform $@$-nodes and $\lambda$-nodes into trivalent vertices, attach a univalent vertex to the end of the outgoing wire, and finally forget the orientations of the wires.
The result is manifestly a rooted trivalent map with boundary of degree $k$ and $p+q$ trivalent vertices.
\end{proof}
\noindent
We call the map described in the proof of \Cref{prop:linear-trivalent} the \definand{underlying rooted trivalent map} of a linear lambda term.
Although the high-level description is clear enough, we can also explicitly compute the permutations $v$ and $e$ associated to the underlying rooted trivalent map by induction on the structure of the linear lambda term, which just requires a bit of bookkeeping.
What is somewhat more surprising is that this ``forgetful'' transformation can be reversed -- our next topic.

\section{From rooted trivalent maps to linear lambda terms}
\label{sec:trivalent-linear}

Given a rooted trivalent map $M$ (with a boundary of dangling edges $\Gamma(M)$), our task is to find a linear lambda term (with free variables $\Gamma$) whose underlying rooted trivalent map is $M$ (hopefully, such a linear lambda term always exists and is unique!).
To be able to do this, clearly we will somehow have to decide for each trivalent vertex whether it corresponds to an application ($@$) or an abstraction ($\lambda$).
The trick is that there is always an immediate answer for the trivalent vertex incident to the root: if removing that vertex disconnects the underlying graph then the vertex corresponds to an $@$-node, and otherwise it corresponds to a $\lambda$-node.
In either case, we can re-root the resulting submap(s) (while adjusting the boundary)
$$
\vcenter{\hbox{\scalebox{0.8}{
\begin{tikzpicture}
  \node (root) {$\bullet$};
  \node (app) [above=1em of root] {$\bullet$};
  \node (map1) [above left=1em of app] {$M_1$};
  \node (map2) [above right=1em of app] {$M_2$};
  \draw (app.center) to (root.center);
  \draw (map1) to (app.center);
  \draw (map2) to (app.center);
\end{tikzpicture}}}}
\quad\overset{\text{disconnected}}\longrightarrow\quad
\vcenter{\hbox{\scalebox{0.8}{
\begin{tikzpicture}
  \node (root) {$\bullet$};
  \node (map1) [above=1em of root] {$M_1$};
  \draw (map1) to (root.center);
\end{tikzpicture}}}}
+ 
\vcenter{\hbox{\scalebox{0.8}{
\begin{tikzpicture}
  \node (root) {$\bullet$};
  \node (map2) [above=1em of root] {$M_2$};
  \draw (map2) to (root.center);
\end{tikzpicture}}}}
\qquad\qquad
\vcenter{\hbox{\scalebox{0.8}{
\begin{tikzpicture}
  \node (root) {$\bullet$};
  \node (lam) [above=1em of root] {$\bullet$};
  \node (map1) [above left=1em of lam] {$M_1$};
  \draw (lam.center) to (root.center);
  \draw (map1) to (lam.center);
  \draw[bend right=90,min distance=3em] (lam.center) to (map1);
\end{tikzpicture}}}}
\quad\overset{\text{connected}}\longrightarrow\quad
\vcenter{\hbox{\scalebox{0.8}{
\begin{tikzpicture}
  \node (root) {$\bullet$};
  \node (map1) [above=1em of root] {$M_1$};
  \node (varx) [right=1em of root] {};
  \draw (map1) to (root.center);
  \draw [bend right=45,min distance=1em] (varx.center) to (map1);
\end{tikzpicture}}}}
$$
and continue the process recursively until we eventually arrive at the trivial map 
$\vcenter{\hbox{\scalebox{0.6}{
\begin{tikzpicture}
  \node (root) {$\bullet$};
  \node (varx) [above=0.5em of root] {};
  \draw (varx.center) to (root.center);
\end{tikzpicture}}}}$.
For example, here we start applying this process to a closed rooted trivalent map
$$
\vcenter{\hbox{\scalebox{0.7}{
\begin{tikzpicture}
  \node (root) {$\bullet$};
  \node (lamx) [above=1em of root] {$\bullet$};
  \node (lamy) [above left=1em of lamx] {$\bullet$};
  \node (appx) [above=2em of lamy] {$\bullet$};
  \node (lamz) [right=2em of appx] {$\bullet$};
  \node (appy) [above=1em of lamz] {$\bullet$};
  \node (vary) [above right=0.5em and 0em of appy] {};
  \node (dumy) [right=2em of lamz] {};
  \node (varx) [above right=4em and 1em of appx] {};
  \draw (lamx.center) to (root.center);
  \draw (lamy.center) to (lamx.center);
  \draw[bend right=60] (appx.center) to (lamy.center);
  \draw[bend left=45] (lamz.center) to (appx.center);
  \draw[bend right=20,rounded corners=5pt] (lamy.center) to (dumy.center) to (vary.center) to (appy.center);
  \draw[bend right=80,min distance=3cm,rounded corners=5pt] (lamx.center) to (varx.center) to [bend right=30,min distance=0em] (appx.center);
  \draw[bend right=90] (appy.center) to (lamz.center);  
  \draw[bend right=90] (lamz.center) to (appy.center);  
\end{tikzpicture}}}}
\overset{\text{connected}}\longrightarrow\quad
\vcenter{\hbox{\scalebox{0.7}{
\begin{tikzpicture}
  \node (root) {$\bullet$};
  \node (lamy) [above=1em of root] {$\bullet$};
  \node (appx) [above=2em of lamy] {$\bullet$};
  \node (lamz) [right=2em of appx] {$\bullet$};
  \node (appy) [above=1em of lamz] {$\bullet$};
  \node (vary) [above right=0.5em and 0em of appy] {};
  \node (dumy) [right=2em of lamz] {};
  \node (varx) [above=3em of appx] {};
  \draw (lamy.center) to (root.center);
  \draw[bend right=60] (appx.center) to (lamy.center);
  \draw[bend left=45] (lamz.center) to (appx.center);
  \draw[bend right=20,rounded corners=5pt] (lamy.center) to (dumy.center) to (vary.center) to (appy.center);
  \draw (varx.center) to (appx.center);
  \draw[bend right=90] (appy.center) to (lamz.center);  
  \draw[bend right=90] (lamz.center) to (appy.center);  
\end{tikzpicture}}}}
\quad\overset{\text{connected}}\longrightarrow
\vcenter{\hbox{\scalebox{0.7}{
\begin{tikzpicture}
  \node (root) {$\bullet$};
  \node (appx) [above right=1.5em and 0em of root] {$\bullet$};
  \node (lamz) [right=2em of appx] {$\bullet$};
  \node (appy) [above=1em of lamz] {$\bullet$};
  \node (vary) [above=1em of appy] {};
  \node (varx) [above=3.2em of appx] {};
  \draw (appx.center) to (root.center);
  \draw[bend left=45] (lamz.center) to (appx.center);
  \draw (vary.center) to (appy.center);
  \draw (varx.center) to (appx.center);
  \draw[bend right=90] (appy.center) to (lamz.center);  
  \draw[bend right=90] (lamz.center) to (appy.center);  
\end{tikzpicture}}}}
\quad\overset{\text{disconnected}}\longrightarrow\quad
\vcenter{\hbox{\scalebox{0.7}{
\begin{tikzpicture}
  \node (root) {$\bullet$};
  \node (varx) [above=1em of root] {};
  \draw (varx.center) to (root.center);
\end{tikzpicture}}}}
+ 
\vcenter{\hbox{\scalebox{0.7}{
\begin{tikzpicture}
  \node (root) {$\bullet$};
  \node (lamz) [above=1em of root] {$\bullet$};
  \node (appy) [above=1em of lamz] {$\bullet$};
  \node (vary) [above=1em of appy] {};
  \node (varz) [right=1em of vary] {};
  \draw (vary.center) to (appy.center);
  \draw[bend right=90] (appy.center) to (lamz.center);  
  \draw[bend right=90] (lamz.center) to (appy.center);  
  \draw (lamz.center) to (root.center);  
\end{tikzpicture}}}}
$$
and continue until we are left with only trivial maps:
$$
\vcenter{\hbox{\scalebox{0.8}{
\begin{tikzpicture}
  \node (root) {$\bullet$};
  \node (lamz) [above=1em of root] {$\bullet$};
  \node (appy) [above=1em of lamz] {$\bullet$};
  \node (vary) [above=1em of appy] {};
  \node (varz) [right=1em of vary] {};
  \draw (vary.center) to (appy.center);
  \draw[bend right=90] (appy.center) to (lamz.center);  
  \draw[bend right=90] (lamz.center) to (appy.center);  
  \draw (lamz.center) to (root.center);  
\end{tikzpicture}}}}
\quad\overset{\text{connected}}\longrightarrow\quad
\vcenter{\hbox{\scalebox{0.8}{
\begin{tikzpicture}
  \node (root) {$\bullet$};
  \node (appy) [above=1em of root] {$\bullet$};
  \node (vary) [above=1em of appy] {};
  \node (varz) [right=1em of vary] {};
  \draw (vary.center) to (appy.center);
  \draw[bend right=90] (appy.center) to (root.center);  
  \draw[out=-90,in=-60] (varz.center) to (appy.center);  
\end{tikzpicture}}}}
\quad\overset{\text{disconnected}}\longrightarrow\quad
\vcenter{\hbox{\scalebox{0.8}{
\begin{tikzpicture}
  \node (root) {$\bullet$};
  \node (varx) [above=1em of root] {};
  \draw (varx.center) to (root.center);
\end{tikzpicture}}}}
+
\vcenter{\hbox{\scalebox{0.8}{
\begin{tikzpicture}
  \node (root) {$\bullet$};
  \node (varx) [above=1em of root] {};
  \draw (varx.center) to (root.center);
\end{tikzpicture}}}}
$$
From the trace of this decomposition, we can reconstruct the necessarily unique linear lambda term whose underlying rooted trivalent map is our original map:
$$
\vcenter{\hbox{\scalebox{0.8}{
\begin{tikzpicture}
  \node (root) {};
  \node (varx) [above=1em of root] {};
  \draw[->-] (varx.center) to (root.center);
\end{tikzpicture}}}}
+
\vcenter{\hbox{\scalebox{0.8}{
\begin{tikzpicture}
  \node (root) {};
  \node (varx) [above=1em of root] {};
  \draw[->-] (varx.center) to (root.center);
\end{tikzpicture}}}}
\quad\underset{@}\longrightarrow\quad
\vcenter{\hbox{\scalebox{0.8}{
\begin{tikzpicture}
  \node (root) {};
  \node (appy) [above=1em of root] {$@$};
  \node (vary) [above=1em of appy] {};
  \node (varz) [right=1em of vary] {};
  \draw[->-] (vary.center) to (appy);
  \draw[->-,bend right=90] (appy) to (root.center);  
  \draw[->-,out=-90,in=-60] (varz.center) to (appy);  
\end{tikzpicture}}}}
\quad\underset{\lambda}\longrightarrow\quad
\vcenter{\hbox{\scalebox{0.8}{
\begin{tikzpicture}
  \node (root) {};
  \node (lamz) [above=1em of root] {$\lambda$};
  \node (appy) [above=1em of lamz] {$@$};
  \node (vary) [above=1em of appy] {};
  \node (varz) [right=1em of vary] {};
  \draw[->-] (vary.center) to (appy);
  \draw[->-,bend right=90] (appy) to (lamz);  
  \draw[->-,bend right=90] (lamz) to (appy);  
  \draw[->-] (lamz) to (root.center);  
\end{tikzpicture}}}}
$$

$$
\vcenter{\hbox{\scalebox{0.7}{
\begin{tikzpicture}
  \node (root) {};
  \node (varx) [above=1em of root] {};
  \draw[->-] (varx.center) to (root.center);
\end{tikzpicture}}}}
+
\vcenter{\hbox{\scalebox{0.7}{
\begin{tikzpicture}
  \node (root) {};
  \node (lamz) [above=1em of root] {$\lambda$};
  \node (appy) [above=1em of lamz] {$@$};
  \node (vary) [above=1em of appy] {};
  \node (varz) [right=1em of vary] {};
  \draw[->-] (vary.center) to (appy);
  \draw[->-,bend right=90] (appy) to (lamz);  
  \draw[->-,bend right=90] (lamz) to (appy);  
  \draw[->-] (lamz) to (root.center);  
\end{tikzpicture}}}}
\quad\underset{@}\longrightarrow\quad
\vcenter{\hbox{\scalebox{0.7}{
\begin{tikzpicture}
  \node (root) {};
  \node (appx) [above right=1.5em and 0em of root] {$@$};
  \node (lamz) [right=2em of appx] {$\lambda$};
  \node (appy) [above=1em of lamz] {$@$};
  \node (vary) [above=1em of appy] {};
  \node (varx) [above=3.2em of appx] {};
  \draw[->-] (appx) to (root.center);
  \draw[->-,bend left=45] (lamz) to (appx);
  \draw[->-] (vary.center) to (appy);
  \draw[->-] (varx.center) to (appx);
  \draw[->-,bend right=90] (appy) to (lamz);  
  \draw[->-,bend right=90] (lamz) to (appy);  
\end{tikzpicture}}}}
\quad\underset{\lambda}\longrightarrow\quad
\vcenter{\hbox{\scalebox{0.7}{
\begin{tikzpicture}
  \node (root) {};
  \node (lamy) [above=1em of root] {$\lambda$};
  \node (appx) [above=2em of lamy] {$@$};
  \node (lamz) [right=2em of appx] {$\lambda$};
  \node (appy) [above=1em of lamz] {$@$};
  \node (vary) [above right=0.5em and 0em of appy] {};
  \node (dumy) [right=2em of lamz] {};
  \node (varx) [above=3em of appx] {};
  \draw[->-] (lamy) to (root.center);
  \draw[->-,bend right=60] (appx) to (lamy);
  \draw[->-,bend left=45] (lamz) to (appx);
  \draw[->-,bend right=20,rounded corners=5pt] (lamy) to (dumy.center) to (vary.center) to (appy);
  \draw[->-] (varx.center) to (appx);
  \draw[->-,bend right=90] (appy) to (lamz);  
  \draw[->-,bend right=90] (lamz) to (appy);  
\end{tikzpicture}}}}
\quad\underset{\lambda}\longrightarrow\quad
\vcenter{\hbox{\scalebox{0.7}{
\begin{tikzpicture}
  \node (root) {};
  \node (lamx) [above=1em of root] {$\lambda$};
  \node (lamy) [above left=1em of lamx] {$\lambda$};
  \node (appx) [above=2em of lamy] {$@$};
  \node (lamz) [right=2em of appx] {$\lambda$};
  \node (appy) [above=1em of lamz] {$@$};
  \node (vary) [above right=0.5em and 0em of appy] {};
  \node (dumy) [right=2em of lamz] {};
  \node (varx) [above right=4em and 1em of appx] {};
  \draw[->-] (lamx) to (root.center);
  \draw[->-] (lamy) to (lamx);
  \draw[->-,bend right=60] (appx) to (lamy);
  \draw[->-,bend left=45] (lamz) to (appx);
  \draw[->-,bend right=20,rounded corners=5pt] (lamy) to (dumy.center) to (vary.center) to (appy);
  \draw[->-,bend right=80,min distance=3cm,rounded corners=5pt] (lamx) to (varx.center) to [bend right=30,min distance=0em] (appx);
  \draw[->-,bend right=90] (appy) to (lamz);  
  \draw[->-,bend right=90] (lamz) to (appy);  
\end{tikzpicture}}}}
$$
\begin{theorem}\label{thm:trivalent-linear}
  To any rooted trivalent map $M$ with boundary of degree $k$ and $n$ trivalent vertices there is a unique linear lambda term with $k$ free variables, $p$ applications and $q$ abstractions whose underlying rooted trivalent map is $M$, for some $p+q = n$.
\end{theorem}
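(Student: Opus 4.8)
The plan is to establish existence and uniqueness simultaneously by induction on the number $n$ of trivalent vertices, following the recursive decomposition sketched just above the statement. For the base case $n=0$, \Cref{def:rtmb} forces $M$ to be the trivial map $M=\{x\}$ with $r(M)=\Gamma(M)=x$, whose boundary degree is $k=1$; inspecting the rules in \Cref{def:linear}, the only linear lambda term with no applications or abstractions is a single variable $x\vdash x$, and its underlying rooted trivalent map is exactly the trivial map, so both existence and uniqueness hold with $p=q=0$. The whole content of the inductive step is a single structural observation about the trivalent vertex $w$ incident to the root.

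First I would check that for $n>0$ the vertex $w$ is well-defined: transitivity of $M$ together with $\fixpt{v}(M)=\{r\}$ forces $e(r)\neq r$, and then $e(r),\,v(e(r)),\,v^2(e(r))$ are three distinct darts forming a genuine trivalent vertex $w$. Deleting $w$ (its three darts, together with the incident root vertex and root edge) leaves a graph with one or two connected components carrying the remaining boundary darts, and the key \textbf{dichotomy lemma} is: deleting $w$ disconnects the graph into two components precisely when $w$ must be read as an $@$-node, and leaves it connected precisely when $w$ must be read as a $\lambda$-node. Granting this, in the disconnected case I re-root each component $M_1,M_2$ by attaching a fresh univalent root vertex to the dart of $w$ pointing into it, distribute the boundary darts of $M$ to whichever component contains them (ordered so as to be compatible with the layout convention of \Cref{sec:stringdiagrams}, which is also what tells us which component is the ``function side'' $M_1$ and which the ``argument side'' $M_2$), apply the induction hypothesis to obtain unique terms $\Gamma\vdash t_1$ and $\Delta\vdash t_2$, and form $\Gamma,\Delta\vdash t_1(t_2)$. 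In the connected case, deleting $w$ turns one of its non-root darts into the new root and the other into a new boundary dart $x$ (positions again dictated by the convention), giving a single map $M_1$ of boundary degree $k+1$ with $n-1$ trivalent vertices; the induction hypothesis yields a unique $\Gamma,x\vdash t$, and I form $\Gamma\vdash\lambda x[t]$ (using the exchange rule to bring $x$ to the end of the context if necessary). A short count then confirms that free variables, applications and abstractions add up correctly with $p+q=n$, and that the forgetful transformation of \Cref{prop:linear-trivalent} applied to the reconstructed term returns $M$.

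The dichotomy lemma is the heart of the argument and is where uniqueness comes from. One direction uses \emph{linearity and scoping}: if $t=t_1(t_2)$ then the string diagrams of $t_1$ and $t_2$ share no wire, because every binding edge of $t_1$ stays within $t_1$ (a $\lambda$ binds an occurrence inside its own body) and the contexts $\Gamma$ and $\Delta$ of $t_1$ and $t_2$ are disjoint and attach only to the boundary, never to each other; hence deleting the outermost $@$-node disconnects the diagram. The other direction uses that if $t=\lambda x[t_0]$, then both the parameter wire and the body wire of the $\lambda$-node attach to the subdiagram of $t_0$, so deleting the $\lambda$-node leaves the diagram connected; here I would also record the auxiliary fact — provable by the same induction — that the underlying graph of any linear lambda term is connected, which is in any case what makes \Cref{prop:linear-trivalent} land in a transitive $\gset\trigp$ as required by \Cref{def:rtmb}. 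Given the lemma, uniqueness is immediate by induction: two terms with the same underlying map $M$ have root nodes of the same kind (read off from connectivity of $M\setminus w$), their immediate subterms have underlying maps equal to the intrinsically determined submap(s) of $M$, and the induction hypothesis identifies those subterms up to $\alpha$-equivalence — hence the terms themselves, with the choice of bound and free variable names absorbed into $\alpha$-equivalence.

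I expect the main obstacle to be not the dichotomy lemma but the bookkeeping needed to make the re-rooting and boundary-reordering genuinely \emph{functorial} and convention-consistent: one must pin down, in terms of the permutations $v$ and $e$, exactly which dart of $w$ becomes the new root and which becomes (or joins) the boundary, and in which position, so that applying the forgetful transformation of \Cref{prop:linear-trivalent} to $t_1(t_2)$ or to $\lambda x[t]$ returns precisely $M$ \emph{with its given ordered boundary}, not merely a map isomorphic to it up to a permutation of the boundary. Getting the degenerate cases cleanly through the induction — the identity $\lambda x[x]$, where $w$ carries a loop, and the trivial/empty maps discussed around \Cref{prop:bijroot} — is also part of setting it up correctly, although \Cref{def:rtmb} was arranged precisely to absorb them. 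None of this is deep, but it is where the proof must be careful.
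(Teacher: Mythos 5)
Your proposal is correct and follows essentially the same route as the paper: the paper's own proof is just ``as sketched above, by induction on $n$,'' referring to the same root-deletion procedure (connected vs.\ disconnected dichotomy at the root vertex, re-rooting the submap(s), and recursing), with uniqueness likewise hinging on the fixed ordering of the boundary darts. You have simply filled in more of the details (the dichotomy lemma via linearity/scoping, the connectivity invariant, the bookkeeping caveats) that the paper leaves implicit.
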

\begin{proof}
As sketched above, by induction on $n$.
Note that uniqueness relies on the fact that we define rooted trivalent maps with boundary as equipped with a fixed ordering on dangling edges, which determines the ordering of the free variables inside the context of the corresponding linear lambda term.
\end{proof}
\begin{corollary}\label{corr:bijection}
Rooted trivalent maps with boundary of degree $k$ and $n$ trivalent vertices are in one-to-one correspondence with linear lambda terms with $k$ free variables and $n$ total applications and abstractions.
Both families are counted by the generating function satisfying the functional-differential equation (\ref{eqn:lingf}).
\end{corollary}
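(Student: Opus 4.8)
The plan is to obtain the corollary as an essentially formal consequence of \Cref{prop:linear-trivalent} and \Cref{thm:trivalent-linear}, so that almost all of the work has already been done and only some bookkeeping about parameters and generating-function conventions remains.

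\textbf{The bijection.} \Cref{prop:linear-trivalent} defines a function sending any linear lambda term --- considered up to $\alpha$-equivalence, since the underlying-map construction ignores the names of free and bound variables --- with $k$ free variables and $n$ total applications and abstractions to its underlying rooted trivalent map, which has boundary of degree $k$ and $n$ trivalent vertices. I would then note that \Cref{thm:trivalent-linear} says exactly that this function is a bijection: its existence half gives surjectivity, and its uniqueness half gives injectivity (two linear lambda terms with the same underlying rooted trivalent map must, by the uniqueness clause, coincide up to $\alpha$-equivalence). Restricting to fixed $k$ and $n$ then yields the stated one-to-one correspondence, since these parameters are matched on the nose. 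The one point worth stressing is that the ordered list of dangling edges built into \Cref{def:rtmb} corresponds precisely to the ordered context of the lambda term, so that both sides are being counted with the same notion of labelled boundary / labelled free variables, and no spurious factor of $k!$ intervenes.

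\textbf{The generating function.} From the bijection, for every $n$ and $k$ the number $t_{n,k}$ of $\alpha$-equivalence classes of linear lambda terms with $n$ total applications and abstractions and $k$ free variables equals the number of rooted trivalent maps with boundary of degree $k$ and $n$ trivalent vertices. Hence the single series $\GFlin(z,x) = \sum_{n,k} t_{n,k}\, x^k z^n/k!$ (ordinary in $z$, exponential in $x$) enumerates both families, and it satisfies the functional-differential equation~(\ref{eqn:lingf}) by the argument accompanying that equation. I would also mention the more illuminating route, namely to read~(\ref{eqn:lingf}) straight off the Tutte-style recursion proving \Cref{thm:trivalent-linear}: the trivial map contributes the summand $x$; a map whose root vertex is a cut vertex decomposes uniquely into an ordered pair of smaller maps joined at a new trivalent vertex, and since the boundary of $M$ is an interleaving of the two sub-boundaries this contributes the product of exponential generating functions $z\,\GFlin(z,x)^2$ (the factor $z$ being the new vertex); and a map whose root vertex is not a cut vertex arises uniquely from a single smaller map of boundary degree $k+1$ by binding its last dangling edge at a new vertex, contributing $z\,\partial_x \GFlin(z,x)$. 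These are exactly the three summands of~(\ref{eqn:lingf}).

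\textbf{Obstacles.} There is no real obstacle left: \Cref{thm:trivalent-linear} is the theorem doing the work, and the corollary is formal. The only steps needing a little care are (i) observing that the forgetful transformation of \Cref{prop:linear-trivalent} and the reconstruction of \Cref{thm:trivalent-linear} are genuinely two-sided inverses, and not merely each well-defined --- which, as above, is precisely what the uniqueness clause provides; (ii) keeping the ordering conventions on boundaries and on contexts aligned; and (iii), if one derives~(\ref{eqn:lingf}) combinatorially on the map side, correctly accounting for the $1/k!$ normalization --- i.e.\ checking that forming $t(u)$ with the exchange rule available matches multiplication of exponential generating functions and that $\lambda$-abstraction matches $\partial_x$.
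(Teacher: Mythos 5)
Your proposal is correct and matches the paper's (implicit) reasoning: the paper states this corollary without separate proof, as an immediate consequence of \Cref{prop:linear-trivalent} together with the existence-and-uniqueness statement of \Cref{thm:trivalent-linear}, exactly as you argue. Your additional remark that equation~(\ref{eqn:lingf}) can be read off the Tutte-style root-vertex decomposition on the map side is a sound and illuminating bonus, but it does not change the substance of the argument.
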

\begin{example}\label{ex:bc}
The standard linear combinators \cite{curryhindleysendlin} $\Bcomb = \lambda x[\lambda y[\lambda z[x(yz)]]]$ and $\Ccomb = \lambda x[\lambda y[\lambda z[(xz)y]]]$ correspond to two different rooted embeddings of the $K_4$ graph (respectively, a planar embedding and a toric one):
$$
\vcenter{\hbox{\scalebox{1}{
\begin{tikzpicture}
\node (a) at (-1,0) {$\bullet$};
\node (b) at (0,1.73) {$\bullet$};
\node (c) at (1,0) {$\bullet$};
\node (d) at (0,0.7) {$\bullet$};
\node (e) at (0,0) {$\bullet$};
\node (root) at (0,-0.5) {$\bullet$};
\draw (a.center) to (b.center);
\draw (b.center) to (c.center);
\draw (c.center) to (e.center);
\draw (e.center) to (a.center);
\draw (e.center) to (root.center);
\draw (a.center) to (d.center);
\draw (b.center) to (d.center);
\draw (d.center) to (c.center);
\end{tikzpicture}
}}}
\quad\mapsfrom\quad
\vcenter{\hbox{\scalebox{1}{
\begin{tikzpicture}
\node (a) at (-1,0) {$\lambda$};
\node (b) at (0,1.73) {$\lambda$};
\node (c) at (1,0) {$@$};
\node (d) at (0,0.7) {$@$};
\node (e) at (0,0) {$\lambda$};
\node (root) at (0,-0.75) {};
\draw [->-] (b) to (a);
\draw [->-] (c) to (b);
\draw [->-] (e) to (c);
\draw [->-] (a) to (e);
\draw [->-] (e) to (root);
\draw [->-] (a) to (d);
\draw [->-] (b) to (d);
\draw [->-] (d) to (c);
\end{tikzpicture}
}}}
$$
$$
\vcenter{\hbox{\scalebox{1}{
\begin{tikzpicture}
\node (a) at (-1,0) {$\bullet$};
\node (b) at (-1,2) {$\bullet$};
\node (c) at (1,0) {$\bullet$};
\node (d) at (1,2) {$\bullet$};
\node (e) at (0,0) {$\bullet$};
\node (root) at (0,-0.5) {$\bullet$};
\draw (a.center) to (b.center);
\draw (b.center) to (c.center);
\draw (c.center) to (e.center);
\draw (e.center) to (a.center);
\draw (e.center) to (root.center);
\draw (a.center) to (d.center);
\draw (b.center) to (d.center);
\draw (d.center) to (c.center);
\end{tikzpicture}
}}}
\quad\mapsfrom\quad
\vcenter{\hbox{\scalebox{1}{
\begin{tikzpicture}
\node (a) at (-1,0) {$\lambda$};
\node (b) at (-1,2) {$\lambda$};
\node (c) at (1,0) {$@$};
\node (d) at (1,2) {$@$};
\node (e) at (0,0) {$\lambda$};
\node (root) at (0,-0.75) {};
\draw [->-] (b) to (a);
\draw [->-] (d) to (b);
\draw [->-] (e) to (c);
\draw [->-] (a) to (e);
\draw [->-] (e) to (root);
\draw [!->-=0.25] (a) to (d);
\draw [!->-=0.25] (b) to (c);
\draw [->-] (c) to (d);
\end{tikzpicture}
}}} 
$$
\end{example}
\begin{example}\label{ex:petersen}
A rooted embedding of the Petersen graph, and its corresponding linear lambda term ($\lambda a[\lambda b[\lambda c[\lambda d[\lambda e[a(\lambda f[c(e(b(d(f))))])]]]]]$):
$$
\vcenter{\hbox{\scalebox{1}{
\begin{tikzpicture}
\node (0) at (-1, -0.5) {$\bullet$};
\node (root) at (0, -1) {$\bullet$};
\node (2) at (-1.5, 1.25) {$\bullet$};
\node (3) at (1, -0.5) {$\bullet$};
\node (4) at (0, 2.5) {$\bullet$};
\node (5) at (1.5, 1.25) {$\bullet$};
\node (6) at (0, -0.5) {$\bullet$};
\node (7) at (-0.75, 1) {$\bullet$};
\node (8) at (0, 1.5) {$\bullet$};
\node (9) at (0.75, 1) {$\bullet$};
\node (10) at (-0.5, 0.25) {$\bullet$};
\node (11) at (0.5, 0.25) {$\bullet$};
\draw (0.center) to (2.center);
\draw (5.center) to (3.center);
\draw (4.center) to (5.center);
\draw (2.center) to (4.center);
\draw (6.center) to (0.center);
\draw (3.center) to (6.center);
\draw (6.center) to (root.center);
\draw (4.center) to (8.center);
\draw (10.center) to (8.center);
\draw (9.center) to (10.center);
\draw (7.center) to (9.center);
\draw (11.center) to (7.center);
\draw (8.center) to (11.center);
\draw (5.center) to (9.center);
\draw (2.center) to (7.center);
\draw (10.center) to (0.center);
\draw (3.center) to (11.center);
\end{tikzpicture}}}}
\qquad\mapsfrom\qquad
\vcenter{\hbox{\scalebox{1}{
\begin{tikzpicture}
\node (0) at (-1, -0.5) {$\lambda$};
\node (root) at (0, -1.25) {};
\node (2) at (-1.5, 1.25) {$\lambda$};
\node (3) at (1, -0.5) {$@$};
\node (4) at (0, 2.5) {$\lambda$};
\node (5) at (1.5, 1.25) {$\lambda$};
\node (6) at (0, -0.5) {$\lambda$};
\node (7) at (-0.75, 1) {$@$};
\node (8) at (0, 1.5) {$@$};
\node (9) at (0.75, 1) {$@$};
\node (10) at (-0.5, 0.25) {$@$};
\node (11) at (0.5, 0.25) {$\lambda$};
\draw [->-] (2) to (0);
\draw [->-] (3) to (5);
\draw [->-] (5) to (4);
\draw [->-] (4) to (2);
\draw [->-] (0) to (6);
\draw [->-] (6) to (3);
\draw [->-] (6) to (root);
\draw [->-] (4) to (8);
\draw [->-] (8) to (10);
\draw [->-] (10) to (9);
\draw [->-] (9) to (7);
\draw [->-] (7) to (11);
\draw [->-] (11) to (8);
\draw [->-] (5) to (9);
\draw [->-] (2) to (7);
\draw [->-] (0) to (10);
\draw [->-] (11) to (3);
\end{tikzpicture}}}}
$$
\end{example}

\section{Indecomposable linear lambda terms and the 4CT}
\label{sec:4ct}

Recall that a \emph{bridge} in a connected graph is any edge whose removal disconnects the graph.
Among the first five non-trivial closed rooted trivalent maps, exactly three of them contain bridges (we do not count the outgoing root edge as a bridge):
$$
\vcenter{\hbox{\scalebox{0.8}{
\begin{tikzpicture}
  \node (root) {$\bullet$};
  \node (lamx) [above=1em of root] {$\bullet$};
  \node (appx) [above=1em of lamx] {$\bullet$};
  \node (varx) [above right=2em and 0em of appx] {};
  \node (lamy) [right=1em of appx] {$\bullet$};
  \node (vary) [above=0.6em of lamy] {};
  \draw[] (lamx.center) to (root.center);
  \draw[bend right=45] (appx.center) to (lamx.center);
  \draw[bend left=45,dotted] (lamy.center) to (appx.center);
  \draw[bend right=80,rounded corners=5pt,min distance=2cm] (lamx.center) to (varx.center) to [bend right=10,min distance=0em] (appx.center);
  \draw[bend right=90] (lamy.center) to (vary) to (lamy.center); 
\end{tikzpicture}}}}
\qquad
\vcenter{\hbox{\scalebox{0.8}[0.6]{
\begin{tikzpicture}
  \node (root) {$\bullet$};
  \node (lamx) [above=1em of root] {$\bullet$};
  \node (lamy) [above left=1em and 0em of lamx] {$\bullet$};
  \node (appx) [above=2em of lamy] {$\bullet$};
  \node (varx) [above right=1em and 0em of appx] {};
  \draw[] (lamx.center) to (root.center);
  \draw[] (lamy.center) to (lamx.center);
  \draw[bend right=60] (appx.center) to (lamy.center);
  \draw[bend right=45] (lamy.center) to (appx.center);
  \draw[bend right=80,rounded corners=5pt] (lamx.center) to (varx.center) to [bend right=30] (appx.center);
\end{tikzpicture}}}}
\qquad
\vcenter{\hbox{\scalebox{0.8}[0.6]{
\begin{tikzpicture}
  \node (root) {$\bullet$};
  \node (lamx) [above=1em of root] {$\bullet$};
  \node (lamy) [above left=1em and 0em of lamx] {$\bullet$};
  \node (appx) [above=2em of lamy] {$\bullet$};
  \node (varx) [above right=1em and 0em of appx] {};
  \draw[] (lamx.center) to (root.center);
  \draw[] (lamy.center) to (lamx.center);
  \draw[bend right=60] (appx.center) to (lamy.center);
  \draw[bend right=45,rounded corners=5pt] (lamy.center) to (varx.center) to (appx.center);
  \draw[bend right=45] (lamx.center) to (appx.center);
\end{tikzpicture}}}}
\qquad
\vcenter{\hbox{\scalebox{0.8}[0.6]{
\begin{tikzpicture}
  \node (root) {$\bullet$};
  \node (lamx) [above=1em of root] {$\bullet$};
  \node (appx) [above=1em of lamx] {$\bullet$};
  \node (lamy) [above=1em of appx] {$\bullet$};
  \node (vary) [above=0.6em of lamy] {};
  \draw[] (lamx.center) to (root.center);
  \draw[,bend right=45] (appx.center) to (lamx.center);
  \draw[bend right=45] (lamx.center) to (appx.center);
  \draw[dotted] (lamy.center) to (appx.center);
  \draw[bend right=90] (lamy.center) to (vary) to (lamy.center); 
\end{tikzpicture}}}}
\qquad
\vcenter{\hbox{\scalebox{0.8}{
\begin{tikzpicture}
  \node (root) {$\bullet$};
  \node (appx) [above right=1em and 0em of root] {$\bullet$};
  \node (lamy) [right=1.5em of appx] {$\bullet$};
  \node (vary) [above=0.6em of lamy] {};
  \node (lamx) [above=1em of appx] {$\bullet$};
  \node (varx) [above=0.6em of lamx] {};
  \draw[] (appx.center) to (root.center);
  \draw[bend left=45,dotted] (lamy.center) to (appx.center);
  \draw[dotted] (lamx.center) to (appx.center);
  \draw[bend right=90] (lamx.center) to (varx) to (lamx.center); 
  \draw[bend right=90] (lamy.center) to (vary) to (lamy.center); 
\end{tikzpicture}}}}
$$
If we look at the corresponding string diagrams,
$$
\vcenter{\hbox{\scalebox{0.8}{
\begin{tikzpicture}
  \node (root) {};
  \node (lamx) [above=1em of root] {$\lambda$};
  \node (appx) [above=1em of lamx] {$@$};
  \node (varx) [above right=2em and 0em of appx] {};
  \node (lamy) [right=1em of appx] {$\lambda$};
  \node (vary) [above=0.6em of lamy] {};
  \draw[->-] (lamx) to (root.center);
  \draw[->-,bend right=45] (appx) to (lamx);
  \draw[->-,bend left=45] (lamy) to (appx);
  \draw[->-,bend right=80,rounded corners=5pt,min distance=2cm] (lamx) to (varx.center) to [bend right=10,min distance=0em] (appx);
  \draw[->-,bend right=90] (lamy) to (vary) to (lamy); 
\end{tikzpicture}}}}
\qquad
\vcenter{\hbox{\scalebox{0.8}{
\begin{tikzpicture}
  \node (root) {};
  \node (lamx) [above=1em of root] {$\lambda$};
  \node (lamy) [above left=0.5em and 0em of lamx] {$\lambda$};
  \node (appx) [above=2em of lamy] {$@$};
  \node (varx) [above right=1em and 0em of appx] {};
  \draw[->-] (lamx) to (root.center);
  \draw[->-] (lamy) to (lamx);
  \draw[->-,bend right=60] (appx) to (lamy);
  \draw[->-,bend right=45] (lamy) to (appx);
  \draw[->-,bend right=80,rounded corners=5pt] (lamx) to (varx.center) to [bend right=30] (appx);
\end{tikzpicture}}}}
\qquad
\vcenter{\hbox{\scalebox{0.8}{
\begin{tikzpicture}
  \node (root) {};
  \node (lamx) [above=1em of root] {$\lambda$};
  \node (lamy) [above left=0.5em and 0em of lamx] {$\lambda$};
  \node (appx) [above=2em of lamy] {$@$};
  \node (varx) [above right=1em and 0em of appx] {};
  \draw[->-] (lamx) to (root.center);
  \draw[->-] (lamy) to (lamx);
  \draw[->-,bend right=60] (appx) to (lamy);
  \draw[->-,bend right=45,rounded corners=5pt] (lamy) to (varx.center) to (appx);
  \draw[->-,bend right=45] (lamx) to (appx);
\end{tikzpicture}}}}
\qquad
\vcenter{\hbox{\scalebox{0.8}{
\begin{tikzpicture}
  \node (root) {};
  \node (lamx) [above=1em of root] {$\lambda$};
  \node (appx) [above=1em of lamx] {$@$};
  \node (lamy) [above=1em of appx] {$\lambda$};
  \node (vary) [above=0.6em of lamy] {};
  \draw[->-] (lamx) to (root.center);
  \draw[->-,bend right=45] (appx) to (lamx);
  \draw[->-,bend right=45] (lamx) to (appx);
  \draw[->-] (lamy) to (appx);
  \draw[->-,bend right=90] (lamy) to (vary) to (lamy); 
\end{tikzpicture}}}}
\qquad
\vcenter{\hbox{\scalebox{0.8}{
\begin{tikzpicture}
  \node (root) {};
  \node (appx) [above right=1em and 0em of root] {$@$};
  \node (lamy) [right=1.5em of appx] {$\lambda$};
  \node (vary) [above=0.6em of lamy] {};
  \node (lamx) [above=1em of appx] {$\lambda$};
  \node (varx) [above=0.6em of lamx] {};
  \draw[->-] (appx) to (root.center);
  \draw[->-,bend left=45] (lamy) to (appx);
  \draw[->-] (lamx) to (appx);
  \draw[->-,bend right=90] (lamx) to (varx) to (lamx); 
  \draw[->-,bend right=90] (lamy) to (vary) to (lamy); 
\end{tikzpicture}}}}
$$
we see that each of the bridges corresponds to a wire oriented towards an $@$-node (either in function or in argument position), and that it sends a \emph{closed subterm} of the underlying linear lambda term (in these three cases an identity term $\Icomb$) to that $@$-node.
\begin{definition}\label{def:subterm}
Let $(\Gamma,t)$ be a linear lambda term.
A \definand{subterm} of $(\Gamma,t)$ is a linear lambda term $(\Delta,u)$ that appears in the derivation of $\Gamma \vdash t$.
Explicitly:
\begin{itemize}
\item $(\Gamma,t)$ is a subterm of itself;
\item if $t = t_1(t_2)$ for some $\Gamma_1 \vdash t_1$ and $\Gamma_2 \vdash t_2$, then every subterm $(\Delta,u)$ of $(\Gamma_1,t_1)$ or $(\Gamma_2,t_2)$ is also a subterm of $(\Gamma,t)$; and
\item if $t = \lambda x.t_1$ for some $\Gamma,x \vdash t_1$, then every subterm $(\Delta,u)$ of $((\Gamma,x),t_1)$ is also a subterm of $(\Gamma,t)$.
\end{itemize}
We refer to all the subterms of $(\Gamma,t)$ other than $(\Gamma,t)$ itself as \definand{proper subterms}.
\end{definition}
\begin{definition}\label{def:indecomposable}
A linear lambda term is said to be \definand{decomposable} if it has a closed proper subterm, and \definand{indecomposable} otherwise.
\end{definition}
\begin{proposition}\label{claim:bridgeless}
A closed rooted trivalent map is bridgeless if and only if the corresponding closed linear lambda term is indecomposable.
\end{proposition}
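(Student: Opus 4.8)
The plan is to characterise the bridges of $M$ directly in terms of the string-diagram structure of the corresponding term $t$. Since $t$ is closed, every edge of $M$ is a wire of the string diagram, and every wire is the outgoing (root) wire of a uniquely determined subterm $s$ of $t$: a compound subterm contributes the root wire of its subdiagram, and a bound-variable occurrence $((x),x)$ contributes the parameter wire of its binding $\lambda$-node. Moreover this wire is the root edge of $M$ precisely when $s=t$ (otherwise both its endpoints are trivalent vertices). So the statement reduces to the following key claim: for a \emph{proper} subterm $s$ of the closed term $t$, the outgoing wire $e_s$ of $s$ is a bridge of $M$ if and only if $s$ is closed. Granting this, $M$ has a non-root bridge iff some proper subterm of $t$ is closed, i.e.\ iff $t$ is decomposable, which is exactly the contrapositive of the proposition. (A closed subterm is automatically compound, never a bare variable occurrence, whose context is nonempty.) Throughout I will use the easy fact, provable by structural induction, that the string diagram of any linear lambda term — and more generally of any term-with-a-hole $C[\cdot]$ — is connected.

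For the ``if'' half of the key claim: if $s$ is closed then its subdiagram $N_s$ has no incoming wires, so $e_s$ is the unique edge of $M$ with exactly one endpoint in $N_s$. As $s$ is closed it is compound, so $N_s$ contains a trivalent vertex; as $s\neq t$ the other endpoint of $e_s$ is a trivalent vertex lying outside $N_s$. Hence deleting $e_s$ separates the nonempty connected set $N_s$ from the rest of $M$, so $e_s$ is a bridge — and it is not the root edge, since neither endpoint is the univalent root vertex.

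For the ``only if'' half (contrapositive: if $s$ is not closed then $M\setminus e_s$ is connected) I split into cases. If $s$ is a bound-variable occurrence $((x),x)$, then $e_s$ joins the $\lambda x$-node to a trivalent vertex of the subdiagram of the body of $\lambda x$; but the body wire of $\lambda x$ provides a second route from that node into the same connected subdiagram, so $e_s$ lies on a cycle. If instead $s$ is compound with a free variable $x$ — necessarily bound in $t$ by a $\lambda x$-node $p$ outside $N_s$ — let $e_x\neq e_s$ be the wire carrying that occurrence of $x$, running from $p$ to some trivalent vertex $q\in N_s$. Writing $w\in N_s$ and $w'\notin N_s$ for the endpoints of $e_s$ and $t=C[s]$ for the associated term-with-a-hole, any path through $e_s$ can be rerouted as $w\leadsto q$ inside the connected $N_s$, then $q\to p$ along $e_x$, then $p\leadsto w'$ inside the connected complement $M\setminus N_s$ (which is the string diagram of $C[\cdot]$); none of these legs uses $e_s$, so $M\setminus e_s$ is connected.

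The main obstacle is this ``only if'' direction: the delicate point is the rerouting around $e_s$ in the compound case, which rests on the lemma that the string diagram of a term-with-a-hole is connected and on careful tracking of which wire plays which role (source port versus use site, inside versus outside $N_s$). Everything else is routine once the string-diagram bookkeeping underlying \Cref{prop:linear-trivalent} is made explicit. An alternative would be an induction on the number of trivalent vertices following the recursive decomposition of \Cref{sec:trivalent-linear}, but that seems to require generalising the statement to maps with boundary and is not obviously shorter.
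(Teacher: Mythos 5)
Your proof is correct and rests on the same key observation as the paper's: the outgoing wire of a subterm is a bridge exactly when the subterm is closed, since a free variable supplies an alternate route back through the parameter wire of its binding $\lambda$-node. The difference is only organizational — the paper packages this as an induction over terms via the root-deletion procedure and dismisses the $\lambda$-node wires with a one-line remark, whereas you give a direct classification of every edge of the diagram, which is more explicit about the bookkeeping but not a genuinely different route.
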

\noindent
To prove this claim, let's first recall the notion of the \emph{lambda lifting} of a term with free variables.
\begin{definition}
Let $(\Gamma,t)$ be a linear lambda term, where $\Gamma = (x_1,\dots,x_k)$.
The \definand{lambda lifting} of $(\Gamma,t)$ is the closed linear lambda term $\llift{\Gamma}t \defeq \lambda x_1[\cdots\lambda x_k[t]\cdots]$.
\end{definition}
\begin{proof}[Proof of \Cref{claim:bridgeless}]
By \Cref{thm:trivalent-linear}, it suffices to do an induction over linear lambda terms, and check whether the underlying rooted trivalent map of their lambda lifting contains a bridge (and again, we do not count the outgoing root edge itself as a bridge).
From examination of the root-deletion procedure, it is immediate that the only way of potentially introducing a bridge is by using an $@$-node to form an application $t = t_1(t_2)$, so we just have to check whether or not removing the edge corresponding to either the function port ($t_1$) or argument port ($t_2$) of the $@$-node disconnects the diagram of $\llift{\Gamma}t$.
Well, if $t_i$ has a free variable $x$, then the subdiagram rooted at $t_i$ will remain connected to the root port of $\llift{\Gamma}t$, by a path running through the root port of the $\lambda$-node for $x$.
Hence, the edge corresponding to $t_i$ is a bridge in the underlying rooted trivalent map of $\llift{\Gamma}t$ just in case $t_i$ is closed.
\end{proof}
\noindent
This analysis immediately suggests a way of enumerating bridgeless rooted trivalent maps.
\begin{proposition}
The generating function $\GFlinind(z,x)$ counting indecomposable linear lambda terms by size (= number of applications and abstractions) and number of free variables satisfies the following functional-differential equation:
\begin{equation}
\GFlinind(z,x) = x + z(\GFlinind(z,x)-\GFlinind(z,0))^2 + z\frac{\partial}{\partial x}\GFlinind(z,x)
\label{eqn:lingf-ind}
\end{equation}
In particular, the OGF $$\GFlinind(z,0) = z + 2z^3 + 20z^5 + 352 z^7 + 8624^9 + 266784 z^{11} + \dots$$
counts closed indecomposable linear lambda terms by size, as well as closed bridgeless rooted trivalent maps (on oriented surfaces of arbitrary genus) by number of trivalent vertices.
\end{proposition}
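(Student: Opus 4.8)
The plan is to reduce both claims of the proposition to a single statement about the generating function, and then to derive the functional-differential equation (\ref{eqn:lingf-ind}) from a structural analysis of indecomposability that runs parallel to the derivation of (\ref{eqn:lingf}). By \Cref{corr:bijection} together with \Cref{claim:bridgeless}, the number of closed bridgeless rooted trivalent maps with $n$ trivalent vertices equals the number of closed indecomposable linear lambda terms with $n$ total applications and abstractions; so it suffices to prove that $\GFlinind(z,x)$ satisfies (\ref{eqn:lingf-ind}), after which specializing $x = 0$ yields the stated ordinary generating function and hence the enumeration of closed bridgeless maps.

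The heart of the argument is a recursive characterization of indecomposability, proved by induction on the structure of a linear lambda term $(\Gamma,t)$ directly from \Cref{def:subterm} and \Cref{def:indecomposable}. There are three cases: (i) a variable $x \vdash x$ has no proper subterms at all, hence is always indecomposable; (ii) an abstraction $\lambda x[t]$ is indecomposable if and only if its body $t$ is indecomposable, the point being that $t$ has a free occurrence of $x$ and so is never closed, while the proper subterms of $\lambda x[t]$ are exactly the subterms of $t$, so a closed proper subterm of the abstraction is the same thing as a closed proper subterm of $t$; (iii) an application $t_1(t_2)$ is indecomposable if and only if \emph{both} $t_1$ and $t_2$ are indecomposable \emph{and} non-closed, since the proper subterms of $t_1(t_2)$ are the subterms of $t_1$ together with those of $t_2$, so the existence of a closed proper subterm is equivalent to one of $t_1,t_2$ being either closed or decomposable.

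Translating this characterization into generating functions then proceeds exactly as for the three summands of (\ref{eqn:lingf}), recalling that $\GFlinind(z,x)$, like $\GFlin(z,x)$, is of exponential type in $x$ because of the exchange rule of \Cref{def:linear}. The variable case again contributes the summand $x$, and the abstraction case again contributes $z\frac{\partial}{\partial x}\GFlinind(z,x)$, since binding a variable removes one free variable while adding one node, and by case (ii) the operation $t \mapsto \lambda x[t]$ restricts to a size-graded bijection on indecomposable terms. The only change from (\ref{eqn:lingf}) is in the application case: by case (iii) we must pair two \emph{non-closed} indecomposable terms, and the exponential generating function for non-closed indecomposable terms is $\GFlinind(z,x) - \GFlinind(z,0)$, since setting $x = 0$ extracts precisely the closed ones. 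Concatenating the two contexts and adding an $@$-node contributes a factor of $z$, so the application case contributes $z\bigl(\GFlinind(z,x)-\GFlinind(z,0)\bigr)^2$, which gives (\ref{eqn:lingf-ind}).

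I expect the main obstacle to be the careful bookkeeping in case (iii) of the structural characterization: one must make precise that the property ``has a closed subterm'' (allowing the term itself as an improper subterm) distributes cleanly over an application but interacts differently with an abstraction, because the bound variable is always free in the body, so that the body can never itself be the offending closed subterm. Once that characterization is pinned down, the generating function translation is routine, being formally identical to the one already carried out for (\ref{eqn:lingf}) apart from the substitution of $\GFlinind(z,x)-\GFlinind(z,0)$ for $\GFlinind(z,x)$ in the quadratic term.
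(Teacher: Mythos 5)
Your proposal is correct and is exactly the argument the paper intends: the paper states this proposition without proof as an immediate consequence of \Cref{claim:bridgeless}, \Cref{corr:bijection}, and the derivation of (\ref{eqn:lingf}), and your structural characterization of indecomposability (abstractions preserve it because the body is never closed; applications additionally require both operands to be non-closed) is precisely the observation that justifies replacing $\GFlinind(z,x)^2$ by $(\GFlinind(z,x)-\GFlinind(z,0))^2$ in the quadratic term. You have merely made explicit the bookkeeping the paper leaves implicit, and it is all sound.
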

\noindent
Now, let us say that a linear lambda term is \definand{planar} just in case its underlying rooted trivalent map is planar.
Planar lambda terms have the special property that after we've fixed the convention for ordering wires around $@$-nodes and $\lambda$-nodes, there is always exactly one planar term with any given underlying tree of applications and abstractions (see \cite[Zeilberger and Giorgetti (2015, \S2)]{zg2015rpmnpt}).
This makes them easy to count, and the following ``discrete analogues'' of (\ref{eqn:lingf}) and (\ref{eqn:lingf-ind}) define the two-variable generating functions for planar lambda terms and indecomposable planar lambda terms, respectively:
\begin{align}
\GFpla(z,x) &= x + z\GFpla(z,x)^2 + z\frac{\GFpla(z,x) - \GFpla(z,0)}{x} \label{eqn:plagf} \\
\GFplaind(z,x) &= x + z(\GFplaind(z,x)-\GFplaind(z,0))^2 + z\frac{\GFplaind(z,x) - \GFplaind(z,0)}{x} \label{eqn:plagf-ind}
\end{align}
In particular, the OGF $\GFplaind(z,0) = z + z^3 + 4z^5 + 24 z^7 + 176 z^9 + 1456 z^{11} + \dots$ counts rooted bridgeless planar trivalent maps by number of trivalent vertices, as originally enumerated by Tutte \shortcite{tutte1962} (OEIS \href{https://oeis.org/A000309}{A000309}; keep in mind that we define closed rooted trivalent maps to contain one extra trivalent vertex relative to the classical definition, cf.~\Cref{prop:bijroot}).

Bridgeless planar trivalent maps are closely related to the Four Color Theorem: by Tait's well-known reduction \cite{thomas98}, the statement that every bridgeless planar map has a proper 4-coloring of its faces is equivalent to the statement that every bridgeless planar trivalent map has a proper 3-coloring of its edges, i.e., a labelling of the edges by colors in $\set{R,G,B}$ such that every vertex has the form
$$
\vcenter{\hbox{\begin{tikzpicture}
\node (R) at (0,0.6) {};
\node (G) at (0.5,-0.5) {};
\node (B) at (-0.5,-0.5) {};
\node (v) at (0,0) {$\bullet$};
\path
  (R) edge [red] node [near start,right,black] {\tiny R} (v.center)
  (G) edge [darkgreen] node [near start,right,black] {\tiny G} (v.center)
  (B) edge [blue] node [near start,left,black] {\tiny B} (v.center);
\node (vp) at (0,0) {$\bullet$};
\end{tikzpicture}}}
\qquad\text{or}\qquad
\vcenter{\hbox{\begin{tikzpicture}
\node (R) at (0,0.6) {};
\node (B) at (0.5,-0.5) {};
\node (G) at (-0.5,-0.5) {};
\node (v) at (0,0) {$\bullet$};
\path
  (R) edge [red] node [near start,right,black] {\tiny R} (v.center)
  (B) edge [blue] node [near start,right,black] {\tiny B} (v.center)
  (G) edge [darkgreen] node [near start,left,black] {\tiny G} (v.center);
\node (vp) at (0,0) {$\bullet$};
\end{tikzpicture}}}.
$$
For the purposes of coloring, there is little difference between rooted and unrooted maps: without loss of generality, it suffices to root a trivalent map $M$ arbitrarily at some edge by splitting it with a trivalent vertex, assign both halves of that edge the same arbitrary color
($
\vcenter{\hbox{\scalebox{0.6}{
\begin{tikzpicture}
\node (root) at (0,-0.3) {$\bullet$};
\node (Rl) at (0.5,0.5) {};
\node (Rr) at (-0.5,0.5) {};
\node (v) at (0,0) {$\bullet$};
\node (map) at (0,0.5) {$M$};
\path
  (root.center) edge (v.center)
  (Rl.center) edge [red] node [near end,right,black] {\tiny R} (v.center)
  (Rr.center) edge [red] node [near end,left,black] {\tiny R} (v.center)
  (Rl.center) edge [dotted,bend right=90] (Rr.center);
\node (vp) at (0,0) {$\bullet$};
\end{tikzpicture}}}}
$), 
and then look for a proper 3-coloring of the remaining edges.

It seems that the problem of 3-coloring the edges of a rooted trivalent map may be naturally formulated as a \emph{typing} problem in linear lambda calculus.
Typing for linear lambda calculus is standardly defined by the following rules:
\begin{equation}\label{eqn:typing}
\infer{x:X \vdash x:X}{}
\quad
\infer{\Gamma,\Delta \vdash t(u) : Y}{\Gamma \vdash t : \impL[Y]{X} & \Delta \vdash u : X}
\quad
\infer{\Gamma \vdash \lambda x[t] : \impL[Y]{X}}{\Gamma,x:X \vdash t : Y}
\quad
\quad
\infer{\Gamma, x:X,y:Y,\Delta \vdash t:Z}{\Gamma, y:Y,x:X,\Delta \vdash t : Z}
\end{equation}
General types ($X,Y,\dots$) are built up from some set of type variables ($\alpha,\beta,\dots$) using only implication ($\impL[Y]{X}$), and the typing judgment $x_1:X_1,\dots,x_k:X_k \vdash t:Y$ expresses that the given term $t$ has type $Y$ assuming that the free variables have the prescribed types $X_1,\dots,X_k$.
In this way, closed linear lambda terms can be seen as proofs of tautologies in a very weak, purely implicative logic (sometimes called BCI logic, after the combinators $\Bcomb$, $\Ccomb$, and $\Icomb$).
Moreover, planar lambda terms are typable without using the rightmost rule, resulting in an even weaker logic.

The standard typing rules can also be expressed concisely using string diagrams, where they correspond to the following conditions for annotating the wires by types (cf.~\cite{mairson2002dilbert,z2015balanced}):
$$
\vcenter{\hbox{\begin{tikzpicture}
  \node (fn) {};
  \node (app) [below=1em of fn] {$@$};
  \node (cont) [below left=1em of app] {};
  \node (arg) [below right=1em of app] {};
  \path
    (fn.center) edge [->-] node [near start,right] {\tiny$\impL[Y]{X}$} (app)
    (app) edge [->-] node [near end,left] {\tiny$Y$} (cont.center)
    (arg.center) edge [->-] node [near start,right] {\tiny$X$} (app);  
\end{tikzpicture}}}
\qquad\qquad
\vcenter{\hbox{\begin{tikzpicture}
  \node (root) {};
  \node (lam) [above=1em of root] {$\lambda$};
  \node (var) [above right=1em of lam] {};
  \node (body) [above left=1em of lam] {};
  \path
     (lam) edge [->-] node [near end,right] {\tiny$\impL[Y]{X}$} (root.center)
     (lam) edge [->-] node [near end,right] {\tiny$X$} (var.center)
     (body.center) edge [->-] node [near start,left=1pt] {\tiny$Y$} (lam);  
\end{tikzpicture}}}
$$
In this form, the connection to edge-coloring is more suggestive.
Indeed, we can use a specific interpretation of types in order to obtain a new reformulation of the map coloring theorem (cf.~\cite{penrose,kauffman1990,barnatan97}):

Recall that the \emph{Klein Four Group} can be defined as a group whose underlying set has four elements $\mathbb{V} = \set{1,R,G,B}$, with unit element 1 and the following multiplication table for non-unit elements:
$$
\begin{array}{c|ccc}
& R & G & B \\
\hline
R & 1 & B & G \\
G & B & 1 & R\\
B & G & R & 1
\end{array}
$$
Observe that the product operation of the Klein Four Group is commutative $xy = yx$, and that every element is its own inverse $x^{-1} = x$.
\begin{definition}
We write $\vdash_{\mathbb V}$ for the typing judgment induced from (\ref{eqn:typing}) by restricting types to elements of the Klein Four Group $\mathbb{V}$ and interpreting implication by $x \ImpL y = yx^{-1} = xy$ for all $x,y \in \mathbb{V}$. 
A \definand{3-typing} of a linear lambda term $t$ with free variables $x_1,\dots,x_k$ is defined as a derivation of the typing judgment $x_1:X_1,\dots,x_k:X_k \vdash_{\mathbb{V}} t:Y$ for some $X_1,\dots,X_k$ and $Y$ in $\mathbb{V}$.
The 3-typing is said to be \definand{proper} if no proper subterm of $t$ is assigned type 1.
\end{definition}
\begin{theorem}[Reformulation of 4CT]
{\it Every planar indecomposable linear lambda term has a proper 3-typing.}
\end{theorem}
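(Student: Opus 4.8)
The plan is to show that this statement is equivalent to the Tait-reduced form of the Four Color Theorem --- every bridgeless planar trivalent map is edge 3-colorable --- so that the theorem follows from 4CT. First I would reduce to the case of \emph{closed} terms by lambda lifting. If $(\Gamma,t)$ is planar and indecomposable with $\Gamma = (x_1,\dots,x_k)$, then $\llift{\Gamma}{t}$ is again planar (lambda lifting only nests new $\lambda$-nodes along the boundary of the string diagram in the order prescribed by $\Gamma$, which preserves planarity; cf.~\cite[Zeilberger and Giorgetti (2015, \S2)]{zg2015rpmnpt}) and again indecomposable: when $k>0$ its proper subterms are exactly $t$, the intermediate liftings $\lambda x_j[\dots\lambda x_k[t]\dots]$ for $j>1$, and the proper subterms of $t$, none of which is closed since each has a free variable (using that $t$ itself is not closed and has no closed proper subterm). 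A proper 3-typing of $\llift{\Gamma}{t}$ then restricts to one of $(\Gamma,t)$, because every proper subterm of $(\Gamma,t)$ --- in particular each free variable occurrence and $t$ itself --- is a proper subterm of $\llift{\Gamma}{t}$, hence not typed by $1$. So it suffices to type every closed planar indecomposable linear lambda term.

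Next I would pass to the map side. By \Cref{thm:trivalent-linear} and \Cref{claim:bridgeless} such a term corresponds to a closed rooted trivalent map $M$ that is planar and bridgeless; smoothing the trivalent vertex $\beta$ incident to the root and deleting the pendant root edge together with its univalent endpoint (\Cref{prop:bijroot}) then yields, after forgetting the rooting, an ordinary bridgeless planar trivalent map $M'$. Here one checks that bridgelessness and planarity survive the smoothing: planarity because smoothing a vertex and removing a pendant edge do not change the genus, and bridgelessness because --- as already observed in the proof of \Cref{claim:bridgeless} --- the only non-root edges of $M$ that could be bridges are the function and argument ports of an $@$-node, and a closed indecomposable term has no closed subterm to sit at such a port. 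The one degenerate case is $\Icomb = \lambda x[x]$, whose underlying map is empty; here $x:R \vdash_{\mathbb{V}} x:R$, $\vdash_{\mathbb{V}} \Icomb : 1$ is already a proper 3-typing.

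Now I would invoke 4CT: by Tait's reduction $M'$ admits a proper edge 3-coloring with colors in $\set{R,G,B}$, and it remains to read this back as a 3-typing. Interpreting implication by the Klein Four Group product, the only condition a 3-typing imposes at an $@$-node or a $\lambda$-node is that the colors of its three incident wires multiply to $1$ (this is just \eqref{eqn:typing} written with string diagrams); and three elements of the group have product $1$ exactly when either they are the three distinct colors or one of them is $1$, in which case the other two agree. Thus a proper edge 3-coloring of $M'$ is precisely a 3-typing of the corresponding term in which every wire is colored in $\set{R,G,B}$. To build the 3-typing of $t$ I would colour the two halves $e_1,e_2$ of the edge that was smoothed at $\beta$ with the colour that edge received in $M'$, colour every other edge as in $M'$, and colour the root edge by $1$ --- which is exactly what the constraint at $\beta$ forces, since $e_1$ and $e_2$ now agree. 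Every $@$- and $\lambda$-node other than $\beta$ is a vertex of $M'$, hence still sees three distinct colors, and the only wire colored $1$ is the root wire, which represents $t$ itself and so is not a proper subterm. This is therefore a proper 3-typing of $t$, which proves the theorem.

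Although only this direction is needed, the converse also holds, so the statement is a genuine reformulation of 4CT. Given a bridgeless planar trivalent map $M_0$, split an arbitrary edge with a new trivalent vertex carrying a pendant root edge to obtain a closed rooted bridgeless planar trivalent map $M$; by \Cref{thm:trivalent-linear} and \Cref{claim:bridgeless} it corresponds to a closed planar indecomposable term $t$, which by hypothesis has a proper 3-typing. Its root wire must be colored $1$: otherwise all three wires at $\beta$ would lie in $\set{R,G,B}$ and be distinct, so every trivalent vertex of $M$ would see three distinct colors, giving a proper edge 3-coloring of $M$ and hence of $M_0$ with one edge subdivided --- impossible, since a subdivided cubic map has one edge too many to be partitioned into three matchings. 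So $e_1$ and $e_2$ get the same colour and un-splitting recovers a proper edge 3-coloring of $M_0$. I expect the main obstacle to be not any single hard step but the bookkeeping in this dictionary: pinning down exactly how ``proper 3-typing'' --- which permits the root wire, and only the root wire, to be colored $1$ --- corresponds to ``edge 3-coloring'' once the root has been split off, and verifying that bridgelessness and planarity are preserved through each translation (term $\leftrightarrow$ rooted map with boundary $\leftrightarrow$ closed rooted map $\leftrightarrow$ ordinary map). Everything genuinely hard is hidden inside the appeal to 4CT.
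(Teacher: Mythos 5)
Your proposal is correct and is essentially the paper's intended argument: the paper states this theorem without a written-out proof, relying on the preceding discussion (Tait's reduction, \Cref{claim:bridgeless} identifying bridgeless maps with indecomposable terms, and the Klein-four-group reading of the typing rules under which the constraint at each node is that the three incident types multiply to $1$), and your write-up fills in exactly those steps, including the reduction to closed terms by lambda lifting and the bookkeeping whereby the root wire --- and only the root wire --- is typed $1$, which matches the paper's remark on rooting a map for coloring purposes and the typing $\Bcomb : 1$ in \Cref{ex:3color-b}. The one point to keep an eye on is that planarity of $\llift{\Gamma}{t}$ depends on the ordering of $\Gamma$, but under the paper's conventions a planar term carries the compatible context ordering (and 3-typings are invariant under exchange), so your reduction goes through.
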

\begin{example}\label{ex:3color-b}
The $\Bcomb$ combinator (\Cref{ex:bc}) has most general type
$$\Bcomb : \impL[({\impL[{(\impL[\gamma]{\alpha})}]{(\impL[\beta]{\alpha})}})]{(\impL[\gamma]{\beta})}$$
corresponding to the following formal typing derivation with type variables $\alpha,\beta,\gamma$:
$$
\infer{\vdash \lambda x[\lambda y[\lambda z[x(yz)]]] : \impL[({\impL[{(\impL[\gamma]{\alpha})}]{(\impL[\beta]{\alpha})}})]{(\impL[\gamma]{\beta})}}{
\infer{x :\impL[\gamma]{\beta} \vdash \lambda y[\lambda z[x(yz)]] : \impL[{(\impL[\gamma]{\alpha})}]{(\impL[\beta]{\alpha})}}{
\infer{x :\impL[\beta]{\alpha}, y : \impL[\beta]{\alpha} \vdash \lambda z[x(yz)] : \impL[\gamma]{\alpha}}{
\infer{x :\impL[\beta]{\alpha}, y : \impL[\beta]{\alpha}, z : \alpha \vdash x(yz) : \gamma}{
 \infer{x :\impL[\gamma]{\beta} \vdash x :\impL[\gamma]{\beta}}{} &
 \infer{y : \impL[\beta]{\alpha}, z : \alpha \vdash y(z) : \beta}{
   \infer{y : \impL[\beta]{\alpha} \vdash y : \impL[\beta]{\alpha}}{} &
   \infer{z : \alpha \vdash z : \alpha}{}
}}}}}
$$
Instantiating $\alpha = R$, $\beta = B$, and $\gamma = G$, we obtain a proper 3-typing of the combinator:
$$
\vcenter{\hbox{\scalebox{1}{
\begin{tikzpicture}[scale=2]
\node (a) at (-1,0) {$\lambda$};
\node (b) at (0,1.73) {$\lambda$};
\node (c) at (1,0) {$@$};
\node (d) at (0,0.7) {$@$};
\node (e) at (0,0) {$\lambda$};
\node (root) at (0,-0.75) {};
\draw [->-,blue] (b) to node [left] {\tiny$\lambda z[x(yz)]:B$} (a);
\draw [->-,color=darkgreen] (c) to node [right] {\tiny$x(yz):G$} (b);
\draw [->-,red] (e) to node [below] {\tiny$x:R$} (c);
\draw [->-,red] (a) to node [below] {\tiny$\lambda y[\lambda z[x(yz)]]:R$} (e);
\draw [->-] (e) to node [left] {\tiny$\Bcomb : 1$} (root);
\draw [->-,color=darkgreen] (a) to node [above=0.0]{\tiny$y:G$} (d);
\draw [->-,red] (b) to node [right=-0.1]{\tiny$z:R$} (d);
\draw [->-,blue] (d) to node [above=0.07]{\tiny$y(z):B$} (c);
\end{tikzpicture}
}}}
$$
\end{example}

\noindent
    {\bf Acknowledgments.}
I am grateful to Alain Giorgetti for introducing me to the idea of Tutte decomposition and for emphasizing its importance as a way of representing rooted maps.
Special thanks to Andrej Bauer for inviting me to give a talk at the University of Ljubljana Foundations Seminar, which was the original impetus for writing this article.
Thanks also to Doron Zeilberger for helpful comments on an earlier draft, as well as to the anonymous referees.
Finally, this work has been funded by the ERC Advanced Grant ProofCert.

\bibliographystyle{abbrvnat}

\begin{thebibliography}{}

\bibitem[\protect\citename{Barendregt, }1984]{barendregt1984}
Barendregt,~H.~P. (1984) {\em The Lambda Calculus: Its Syntax and Semantics}, Studies in Logic 103, second, revised edition, North-Holland, Amsterdam.

\bibitem[\protect\citename{Bar-Natan, }1997]{barnatan97}
Bar-Natan,~D. (1997) Lie algebras and the four color theorem, {\em Combinatorica} 17, 43--52.

\bibitem[\protect\citename{Bodini \emph{et al.}, }2013]{bodini-et-al}
Bodini,~O., Gardy,~D., and Jacquot,~A. (2013) Asymptotics and random sampling for BCI and BCK lambda terms. {\em Theoretical Computer Science}, 502:227--238.

\bibitem[\protect\citename{Curry \emph{et al.}, }1972]{curryhindleysendlin}
Curry,~H.~B., Hindley,~J.~R., and Seldin,~J.~P. (1972) {\em Combinatory Logic}, vol. II, North-Holland.

\bibitem[\protect\citename{Hyland, }2013]{hyland-lambda-calculus}
Hyland,~M. (2013) Classical lambda calculus in modern dress. {\em Mathematical Structures in Computer Science}.  In special issue dedicated to Corrado Böhm for his 90th birthday, 1--20.

\bibitem[\protect\citename{Jacobs, }1993]{jacobs1993}
Jacobs,~B. (1993) Semantics of lambda-I and of other substructural lambda calculi. In \emph{Typed Lambda Calculus and Applications}, Springer LNCS 664, 195--208.

\bibitem[\protect\citename{Jones and Singerman, }1978]{jones-singerman}
Jones,~G.~A. and Singerman,~D. (1978) Theory of maps on orientable surfaces. {\em Proceedings of the London Mathematical Society}, 37:273--307.

\bibitem[\protect\citename{Jones and Singerman, }1994]{jones-singerman94schneps}
Jones,~G.~A. and Singerman,~D. (1994) Maps, hypermaps, and triangle groups.  In {\em The Grothendieck Theory of Dessins d'Enfants}, L. Schneps (ed.), London Mathematical Society Lecture Note Series 200, Cambridge University Press.

\bibitem[\protect\citename{Joyal and Street, }1993]{joyal-street-i}
Joyal,~A. and Street,~R. The geometry of tensor calculus I. {\em Advances in Mathematics}, 102:20--78.

\bibitem[\protect\citename{Kauffman, }1990]{kauffman1990}
Kauffman,~L.H. (1990) Map Coloring and the Vector Cross Product. {\em Journal of Combinatorial Theory B} 48:145--154.

\bibitem[\protect\citename{Knuth, }1970]{knuth1970}
Knuth,~D.E. (1970) Examples of formal semantics. In \emph{Symposium on Semantics of Algorithmic Languages}, E. Engeler (ed.), Lecture Notes in Mathematics 188, Springer.

\bibitem[\protect\citename{Lando and Zvonkin, }2004]{landozvonkin}
Lando,~S.~K. and Zvonkin,~A.K. (2004) {\em Graphs on Surfaces and Their Applications}, Encyclopaedia of Mathematical Sciences 141, Springer-Verlag.

\bibitem[\protect\citename{Mairson, }2002]{mairson2002dilbert}
Mairson,~H.G. (2002) From Hilbert Spaces to Dilbert Spaces: Context Semantics Made Simple. In {\em Proceedings of the 22nd Conference on Foundations of Software Technology and Theoretical Computer Science}, 2--17, Kanpur, India.

\bibitem[\protect\citename{Mairson, }2004]{mairson2004}
Mairson,~H.G. (2004) Linear lambda calculus and PTIME-completeness. \emph{Journal of Functional Programming}, 14:6.

\bibitem[\protect\citename{OEIS, }2016]{oeis}
OEIS Foundation Inc. (2016) The On-Line Encyclopedia of Integer Sequences.

\bibitem[\protect\citename{Penrose, }1971]{penrose}
Penrose,~R. (1971) Applications of Negative Dimensional Tensors.  In D. Welsh (ed.), {\em Combinatorial Mathematics and its Application}, 221--243.

\bibitem[\protect\citename{Scott, }1980]{scott1980}
Scott,~D.~S. (1980) Relating theories of the $\lambda$-calculus.   In {\em To H.B. Curry: Essays on Combinatory Logic, Lambda-Calculus and Formalism} (eds. Hindley and Seldin), Academic Press, 403--450.

\bibitem[\protect\citename{Seely, }1987]{seely1987}
Seely,~R.~A.~G. (1987) Modelling Computations: A 2-Categorical Framework. In {\em Proceedings of the Second Annual IEEE Symposium on Logic in Computer Science}, 65--71, Ithaca, NY, USA.

\bibitem[\protect\citename{Selinger, }2011]{selinger-survey}
Selinger,~P. (2011)  A survey of graphical languages for monoidal categories. In {\em New Structures for Physics} (ed. Bob Coecke), Springer Lecture Notes in Physics 813, 289--355.

\bibitem[\protect\citename{Stay, }2013]{stay2013}
Stay,~M. (2013) Compact closed bicategories. arXiv:1301.1053.

\bibitem[\protect\citename{Thomas, }1998]{thomas98}
Thomas,~R. (1998) An Update on the Four-Color Theorem. {\em Notices of the American Mathematical Society} 45:7, 848--859.

\bibitem[\protect\citename{Tutte, }1962]{tutte1962}
Tutte,~W.~T. (1962) A census of Hamiltonian polygons. {\em Canadian Journal of Mathematics}, 14:402--417.

\bibitem[\protect\citename{Tutte, }1968]{tutte1968}
Tutte,~W.~T. (1968)  On the enumeration of planar maps. {\em Bulletin of the American Mathematical Society}, 74:64--74.

\bibitem[\protect\citename{Vidal, }2010]{vidal2010}
Vidal,~S. (2010) {\em Groupe Modulaire et Cartes Combinatoires: G\'en\'eration et Comptage.} PhD thesis, Universit\'e Lille I, France (July).


\bibitem[\protect\citename{Zeilberger and Giorgetti, }2015]{zg2015rpmnpt}
Zeilberger,~N. and Giorgetti,~A. (2015) A correspondence between rooted planar maps and normal planar lambda terms. {\em Logical Methods in Computer Science}, 11(3:22):1--39.

\bibitem[\protect\citename{Zeilberger, }2015a]{z2015balanced}
Zeilberger,~N. (2015) Balanced polymorphism and linear lambda calculus. Talk at TYPES 2015, Tallinn, Estonia (18 May).

\bibitem[\protect\citename{Zeilberger, }2015b]{z2015counting}
Zeilberger,~N. (2015) Counting isomorphism classes of $\beta$-normal linear lambda terms.
  arXiv:1509.07596
  (25 September)

\end{thebibliography}

\end{document}